\newcommand{\Kf}{red\xspace}
\newcommand{\Pf}{blue\xspace}
\newcommand{\NP}{\ensuremath{\mathbb{NP}}}
\newcommand{\PP}{\ensuremath{\mathbb{P}}}
\newtheorem{lemma}{Lemma}[section]
\newtheorem{proposition}{Proposition}[section]
\newtheorem{theorem}{Theorem}[section]
\newtheorem{remark}{Remark}[section]
\newtheorem{claim}{Claim}
\newtheorem{corollary}{Corollary}[section]
\theoremstyle{definition}
\newcommand{\ol}{\overline}
\newcommand{\sm}{\setminus}
  \newcommand{\cC}{\mathcal{C}}            
  \newcommand{\ie}{\textsl{i.e.}\xspace}   
\author{Faisal N.~Abu-Khzam, Carl Feghali, Haiko M{\"u}ller}
\title{Partitioning a Graph into Disjoint Cliques and a Triangle-free Graph}
\begin{document}
\maketitle

\begin{abstract}
A graph $G = (V, E)$ is \emph{partitionable} if there exists a partition $\{A, B\}$ of $V$ such that $A$ induces a disjoint union of cliques and $B$ induces a triangle-free graph. 
 In this paper we investigate the computational complexity of deciding whether a graph is partitionable. The problem is known to be $\NP$-complete on arbitrary graphs. Here it is proved that if a graph $G$ is bull-free, planar, perfect, $K_4$-free or does not contain certain holes then deciding whether $G$ is partitionable is $\NP$-complete. This answers an open question posed by Thomass{\'e}, Trotignon and Vu\v{s}kovi{\'c}. In contrast a finite list of forbidden induced subgraphs is given for partitionable cographs.  
\end{abstract}

\section{Introduction}\label{introduction}
A graph $G=(V,E)$ is \emph{monopolar} if there exists a partition $\{A, B\}$ of $V$ such that $A$ induces a disjoint union of cliques and $B$ forms an independent set. The class of monopolar graphs has been extensively studied in recent years. It is known that deciding whether a graph is monopolar is $\NP$-complete~\cite{alastair}, even when restricted to triangle-free graphs \cite{churchley1} and planar graphs \cite{vanbang}. In contrast the problem is tractable on several graph classes: a non-exhaustive list includes cographs \cite{ekim}, polar permutation graphs \cite{meister}, chordal graphs \cite{werra}, line graphs \cite{churchley2} and several others~\cite{churchley3}. 
A graph is \emph{$(k, l)$-partitionable} if it can be partitioned in up to $k$ cliques and $l$ independent sets with $k+l \geq 1$. Table \ref{table} contains trivial complexity results on $(k,l)$-partitionable problems in special classes of graphs for $k + l \leq 2$. In \cite{demange} efficient algorithms are devised for solving the $(k, l)$-partition problem on cographs, where $k$ and $l$ are finite. In \cite{klein} a characterization of $(k, l)$-partitionable cographs by forbidden induced subgraphs is provided, where $k$ and $l$ are finite. These results were later extended to $P_4$-sparse graphs \cite{bravo} and $P_4$-laden graphs \cite{bravo1}.
 
 \begin{table}[h]
\centering
    \begin{tabular}{|c|c|c|c|c|c|}
    \hline
    $k$ & $l$ & graph class & recognition & forbidden cographs & forbiden others\\ \hline
    $0$ & $1$ & edge-less & $\mathcal{O}(n)$ & $K_2$ & none \\ \hline
    $1$ & $0$ & complete & $\mathcal{O}(n+m)$ & $2K_1$ & none\\ \hline
    $1$ & $1$ & split & $\mathcal{O}(n+m)$ &$2K_2, C_4$ & $C_5$ \\ \hline
    $0$ & $2$ & bipartite & $\mathcal{O}(n+m)$ & $K_3$ & odd cycles\\ \hline
    $2$ & $0$ & co-bipartite & $\mathcal{O}(n+m)$ & $3K_1$ & odd co-cycles\\ \hline
    \end{tabular}
    \caption[Table]{Some trivial complexity results on $(k, l)$-partitionable problems}
    \label{table}
    \end{table}

Unless stated otherwise, we say that a graph $G = (V, E)$ is \emph{partitionable}, or has a \emph{partition}, if there exists a partition $\{A, B\}$ of $V$ such that $A$ induces a disjoint union of cliques, \ie, the cliques are vertex disjoint and have no edges between them, and $B$ induces a triangle-free graph. A graph is \emph{in-partitionable} if it is not partitionable. The class  of partitionable graphs generalises the classes of monopolar  and $(1, 2)$-partitionable graphs. In this paper we study the computational complexity of deciding whether a graph is partitionable. This problem is known to be $\NP$-complete on general graphs \cite{alastair}. We thus restrict our attention to special classes of graphs. Our hardness results are stated in the following theorem. 

\begin{theorem}\label{planarthm}
Let $G$ be a graph and let $\mathcal{C}$ be a finite set of cycles of lengths at least $5$. Then deciding whether $G$ is partitionable is $\NP$-complete whenever $G$ is bull-free, planar, perfect, $K_4$-free or $\mathcal{C}$-free.
\end{theorem}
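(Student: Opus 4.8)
The plan is first to dispatch membership in \NP, which is immediate: a candidate partition $\{A,B\}$ can be checked in polynomial time by verifying that $A$ induces a disjoint union of cliques (equivalently, contains no induced $P_3$) and that $B$ contains no triangle. All of the work therefore lies in proving \NP-hardness for each of the listed classes, and I would organise every reduction around two local rules that any valid partition must obey, writing \Kf for the vertices of $A$ and \Pf for those of $B$: every induced $P_3$ must contain a \Pf vertex (otherwise its \Kf centre has two non-adjacent \Kf neighbours), and every triangle must contain a \Kf vertex (otherwise it is an all-\Pf triangle). These are exactly the ``clique side'' and ``triangle-free side'' obstructions, and they drive all the gadgets.

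Second, I would build forcing gadgets from these two rules. A triangle two of whose vertices are already \Pf forces its third vertex to be \Kf; an induced $P_3$ both of whose endpoints are \Kf forces its centre to be \Pf. Chaining these yields small gadgets that pin the colour of a designated port vertex in every valid partition, together with a ``not-both-\Pf'' gadget: to forbid an edge $uv$ from lying inside $B$, attach a common neighbour $w$, pin $w$ to \Pf, and apply the triangle rule to $uvw$. With these in hand the natural source is monopolar recognition, which by the cited results is \NP-complete already on triangle-free graphs and on planar graphs. Given such an instance $H$, I would keep $V(H)$, attach a not-both-\Pf gadget to every edge of $H$, and add forcing gadgets so that the only remaining freedom is the \Kf/\Pf choice on $V(H)$. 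Since $H$ is triangle-free the triangle rule is vacuous on $H$ itself, so the construction makes $B\cap V(H)$ behave like an independent set; hence $H$ is monopolar if and only if the whole graph is partitionable.

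Third, the real content is placing the output graph in each target class, which I would do by adjusting the gadgets rather than the logic. For the $K_4$-free and planar cases I would realise every gadget using triangles only, so that no $K_4$ ever appears, and start from a planar source, embedding the local gadgets planarly and routing the edge gadgets through crossover boxes where wires meet. For the $\mathcal{C}$-free case, since $\mathcal{C}$ is a fixed finite set of cycles of length at least $5$, I would subdivide or pad the connecting paths so that every induced cycle created by the reduction has length outside the finitely many forbidden values; this is always achievable because only finitely many lengths must be avoided.

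The genuinely hard classes are perfect and bull-free, and I expect this to be the main obstacle. The cheap forcing gadgets above naturally create induced bulls and odd holes, so here I would redesign the variable and wiring gadgets, replacing pendant-to-triangle attachments by blown-up or bipartite-like substructures whose only triangles are tightly controlled, and then re-prove both forcing directions for these heavier gadgets. Bull-freeness I would verify directly by checking that no triangle-plus-two-pendants configuration survives, and perfection I would establish through the Strong Perfect Graph Theorem by confirming that the construction contains no odd hole and no odd antihole. This simultaneous demand, that the gadgets still enforce the \Kf/\Pf logic \emph{and} that the global graph avoids every obstruction of the class, is the crux; once the gadgets are fixed, the correctness and class-membership arguments reduce to routine bookkeeping.
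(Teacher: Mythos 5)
Your high-level framing (NP membership, the two local rules, gadget-based hardness) is sound, but the construction rests on a step that does not exist: single-vertex pinning. Both of your local rules are only \emph{conditional} propagators --- a triangle forces its third vertex onto the clique side only if the other two are already known to be on the triangle-free side, and an induced $P_3$ forces its centre onto the triangle-free side only if both endpoints are already known to be on the clique side. ``Chaining these'' therefore has no base case, and your ``not-both'' edge gadget explicitly requires a vertex $w$ that is forced onto the triangle-free side in \emph{every} valid partition. No such one-terminal gadget is exhibited, and the natural candidates fail (in the octahedron, for instance, every vertex admits partitions placing it on either side). This is precisely why the paper works with \emph{two-terminal} negators --- graphs in which two designated vertices cannot both receive a given colour, though each individually can --- assembled into bistable ladders, and reduces from 3SAT (truth-assignment components plus $P_3$ clause components) rather than from monopolar recognition. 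Until you either construct a genuine pinning gadget or reformulate your edge constraint as a two-terminal negator, the reduction does not go through for any of the classes.

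Beyond that, the two cases you yourself identify as the crux (bull-free and perfect) are deferred rather than proved, and they are where essentially all of the paper's work lies: bull-freeness is obtained by substituting a specific bull-free red negator into the generic 3SAT reduction, while perfection requires a separate reduction from Positive 1-in-3-SAT with literal and propagator gadgets, a parity argument on induced cycles crossing the truth-assignment ladders to exclude odd holes, and a $K_4$-adjacency analysis to exclude $\ol{C_7}$ and $\ol{C_9}$. Your $\mathcal{C}$-free plan is closest in spirit to the paper's (which simply spaces the gadgets so that any hole meeting two ladders has length exceeding the longest cycle in $\mathcal{C}$), but as written even that case depends on the missing forcing gadgets. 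Finally, note that your colour convention is the reverse of the paper's (there, red denotes the triangle-free part and blue the $P_3$-free part); this is harmless but worth reconciling.
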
 

Theorem~\ref{planarthm} answers an open question posed by Thomass{\'e}, Trotignon and Vu\v{s}kovi{\'c}~\cite{vuskovic} about the complexity of our partition problem on bull-free graphs. 
We also show that the problem is tractable on the class of cographs. It is known that the relation of being an induced subgraph is a well-quasi-ordering on cographs \cite{damaschke}. Since the class of  partitionable cographs forms a subfamily of the class of cographs and is closed under induced subgraphs it follows that partitionable cographs  have a finite list of forbidden induced subgraphs. In this case it is folklore that deciding whether a cograph is partitionable can be done in polynomial-time. However, this proof of membership in $\PP$ is non-constructive. In our next theorem, we provide a constructive proof.

\begin{theorem}\label{mainthm}
A cograph $G$ is partitionable if and only if $G$ does not contain the graphs $H_1, H_2, \dots, H_{17}$ illustrated in Figure \ref{tabH}. 
\end{theorem}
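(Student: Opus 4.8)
The plan is to prove both directions, with the \emph{only if} direction a routine finite verification and the \emph{if} direction carrying the real content.

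First I would record that \emph{partitionability is hereditary}: if $\{A,B\}$ witnesses that $G$ is partitionable, then for any induced subgraph $G'$ the sets $A \cap V(G')$ and $B \cap V(G')$ still induce, respectively, a disjoint union of cliques and a triangle-free graph. Hence if $G$ contains some $H_i$ and every $H_i$ is in-partitionable, then $G$ is in-partitionable. So for the \emph{only if} direction it suffices to check, one graph at a time, that each of $H_1, \dots, H_{17}$ is a cograph and is in-partitionable; this is a bounded, mechanical case check on seventeen small graphs.

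The substance is the \emph{if} direction: every $\{H_1, \dots, H_{17}\}$-free cograph is partitionable. I would argue by induction on $|V(G)|$ using the recursive (cotree) structure of cographs: $G$ is a single vertex, a disjoint union $G_1 \cup \dots \cup G_t$, or a join $G_1 * \dots * G_t$, and in the latter two cases each factor is an induced subgraph, hence $\{H_i\}$-free and partitionable by induction. To combine factors I would track, for each cograph $H$, not merely whether it is partitionable but a finite \emph{type} invariant recording, over all valid partitions $\{A,B\}$, the achievable pairs $(a,b)$, where $a \in \{\text{empty}, \text{clique}, \text{multi}\}$ says whether the red set $A$ is empty, a single clique, or a union of at least two cliques, and $b \in \{\text{empty}, \text{indep}, \text{edge}\}$ says whether the blue set $B$ is empty, a nonempty independent set, or contains an edge. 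There are finitely many types, and $H$ is partitionable exactly when its achievable type set is nonempty.

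The point of this invariant is that both cograph operations act on type sets by simple, associative rules, so the type set of $G$ is computed bottom-up along its cotree. A disjoint union never creates edges or clique-merges across factors, so a union of partitionable factors is always partitionable. A join is the only operation that can destroy partitionability: since every vertex of one factor is adjacent to every vertex of another, a blue edge on one side together with any blue vertex on the other yields a blue triangle, and a red vertex on each side forces the entire red set into one clique. Concretely, two factors combine validly iff their red parts are compatible — at most one side contributes red, or both red parts are single cliques, which then merge — and their blue parts are compatible — at most one side contributes blue, or both blue parts are independent sets. (Processing a $t$-fold join by repeated binary joins correctly rejects, for instance, three blue-contributing factors, since the first two already yield the blue type \emph{edge}, incompatible with a third blue part.) I would then show that whenever $G = G_1 * \dots * G_t$ is $\{H_i\}$-free, the factors' type sets — nonempty by induction — admit a compatible selection, the obstruction to this being precisely the configurations realised by the listed graphs.

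The main obstacle, and the heart of the argument, is the join case together with proving the list is \emph{exactly} $\{H_1, \dots, H_{17}\}$. Since the type invariant gives a finite-state computation on the cotree, the minimal in-partitionable cographs are those whose cotree first drives the achievable type set to $\varnothing$; I would enumerate these by bounding how deeply a minimal obstruction's nested join/union structure can grow before the type set empties, and then match each resulting minimal graph against the figure. Establishing both that no minimal obstruction has been missed and that each $H_i$ is itself minimal — every proper induced subgraph is partitionable — is the delicate, bookkeeping-heavy step; the earlier structural reductions exist mainly to keep this enumeration finite and to pin the count at seventeen.
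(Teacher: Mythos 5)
Your framework is sound as far as it goes: the hereditary argument for the forward direction matches the paper's, and your type invariant (red part empty / single clique / several cliques; blue part empty / independent / containing an edge) is a correct compositional abstraction --- the join and union rules you state are exactly the right compatibility conditions, and they do yield a polynomial-time recognition algorithm on the cotree. But the theorem is not the recognition algorithm; it is the explicit list of seventeen graphs, and that is precisely the part your proposal leaves as an unexecuted plan. The paper already observes (via Damaschke's well-quasi-ordering result) that a finite obstruction set exists non-constructively; the entire content of Theorem \ref{mainthm} is producing it. Your proposed route to the list --- ``bounding how deeply a minimal obstruction's nested join/union structure can grow before the type set empties'' --- does not work as stated, because depth is not the obstacle. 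A minimal in-partitionable cograph such as $H_9 = K_1 \oplus (K_3 \cup (C_4\oplus K_1))$ has tiny cotree depth; the difficulty is that a join factor can be an arbitrarily large partitionable cograph whose type set merely \emph{lacks} a needed type (e.g.\ it is partitionable but not monopolar, or not $(1,2)$-partitionable), and one must convert that deficiency into the presence of a specific small induced subgraph. That is a Ramsey-type extraction problem, not a depth bound.

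Concretely, the paper supplies this missing machinery in two layers that your proposal has no substitute for: (i) forbidden-induced-subgraph characterisations of the intermediate classes whose failure is what makes a join non-partitionable --- bi-threshold cographs (Theorem \ref{btheorem}), monopolar cographs (Theorem \ref{monopolarthm1}), $(1,2)$-partitionable cographs (Corollary \ref{corigraphs}) and monopolar nearly split cographs (Theorem \ref{almostsplitthm}); and (ii) Lemmas \ref{first}--\ref{fourth}, which extract specific small subgraphs from a cograph known only to contain, say, $P_3$ and $2K_2$. The proof then runs a trichotomy on a vertex-minimal in-partitionable $G$ (no universal vertex; universal vertex with $G\smallsetminus v$ disconnected; universal vertex with $G\smallsetminus v$ connected) and in each branch identifies which intermediate class a factor must fail, reads off its forbidden subgraph, and joins it back up to one of $H_1,\dots,H_{17}$. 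Without an analogue of these characterisations your enumeration cannot be carried out, so the backward direction is a genuine gap. (A minor point: establishing that each $H_i$ is itself minimal, on which you propose to spend effort, is not needed for the statement as given --- only that each $H_i$ is in-partitionable and that every in-partitionable cograph contains one of them.)
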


We note that a result due to Stacho~\cite[Theorems 7.7 and 7.8 on pages 132-133]{stacho} shows that our partition problem is tractable on the class of chordal graphs. Moreover, the problem can be expressed in monadic second order logic without  edge set quantification. As a result it can be efficiently solved on graphs with bounded treewidth \cite{MSOLiii}, and bounded clique-width \cite{courcelle2}.

\section{Preliminaries}\label{background} 

All graphs considered here are finite and have no multiple edges and no loops. For undefined graph terminology we refer the reader to Diestel \cite{diestel}. 
Let $G = (V, E)$ be a graph and $V' \subseteq V$. The subgraph $G'$ induced by deleting the vertices $V \setminus V'$ from $G$ is denoted by $G' = G[V']$.  The graph $G \sm v$ is obtained from $G$ by deleting the vertex $v$. The complement of a graph $G$, denoted by $\ol{G}$, has the same vertex set as $G$ and two vertices in $\ol{G}$ are adjacent if and only if they are non-adjacent in $G$.  $K_n$, $C_n$, $P_n$ denote a complete graph, a cycle, and a path on $n$ vertices respectively. A \emph{universal vertex} is a vertex adjacent to every other vertex. An \emph{isolated vertex} is a vertex with no edges. 

A graph $G$ \emph{contains} a graph $H$ implies that $H$ is an induced subgraph of $G$. We say that $G$ is $H$-free if it contains no induced subgraph isomorphic to some graph $H$. Let $\mathcal{H}$ be a family of graphs. Then $G$ is $\mathcal{H}$-free if $G$ is $H$-free for each graph $H \in \mathcal{H}$. . We do not distinguish between isomorphic graphs.  The join $P=G \oplus H$ is formed from the disjoint graphs $G$ and $H$ by joining every vertex of $G$ to every vertex of $H$.   For three graphs $A, B$ and $C$, we have $A \oplus B \oplus C = (A \oplus B) \oplus C$.
The (disjoint) union $Q=G \cup H$ of disjoint graphs $G$ and $H$ has as vertex set $V(Q) = V(G) \cup V(H)$ and edge set $E(Q) = E(G) \cup E(H)$. If $G$ is a disconnected graph then it can be expressed as a union $G_1 \cup G_2 \cup \dots \cup G_k$, $k \geq 2$, of connected graphs. Moreover, each $G_i$ is a component of $G$ and each component is clearly a (vertex) maximal connected subgraph of $G$.  A $k$-colouring of a graph $G = (V, E)$ is a mapping $\phi: V \rightarrow \{1, \dots, k\}$ such that $\phi(u) \not= \phi(v)$ whenever $uv \in E$. A graph is bipartite if and only if it has a $2$-colouring.

 A graph is \textit{planar} if it can be drawn in the plane so that its edges intersect only at their ends. 
 An \textit{odd hole} is an induced cycle of odd length at least $5$. An \textit{odd antihole} is the complement of an odd hole. A graph $G$ is \textit{perfect} if for every induced subgraph $H$ of $G$, the chromatic number of $H$ equals the size of the largest clique of $H$. By the strong perfect graph theorem \cite{chudnovsky}, a graph is perfect if and only if it contains no odd hole and no odd antihole. A \emph{bull} is a self-complementary graph with degree sequence $(3,3,2,1,1)$. The class of cographs is equivalent to the class of $P_4$-free graphs \cite{cograph}.  It is well-known that a cograph or its complement is disconnected unless the cograph is $K_1$. A $P_3$-free graph is a union of cliques. A $\overline{P_3}$-free graph, or equivalently a $(K_2 \cup K_1)$-free graph, is a join of stable sets. Split graphs are exactly the $(1,1)$-partitionable graphs. They are characterized by the absence of
$2K_2$, $C_4$ and $C_5$. The intersection of
cographs and split graphs are the threshold graphs, characterised by
the absence of $2K_2$, $C_4$ and $P_4$. The diamond, paw, and butterfly graph can be written as $K_2 \oplus 2K_1$, $K_1 \oplus (K_1 \cup K_2)$ and $K_1 \oplus 2K_2$, respectively. The $k$-wheel graph is formed by a cycle $C$ of order $k - 1$ and a vertex not in $C$ with $k-1$ neighbours in $C$.  A $5$-wheel can be written as $C_4 \oplus K_1$, or $P_3 \oplus 2K_1$.

\section{Hardness results}

In this section we prove Theorem~\ref{planarthm}. Firstly we provide some gadgets that we will
use in reductions from 3SAT. Let $G=(V,E)$ be a graph and let $\{A,B\}$ be a partition of $V$ such
that $A$ induces a $K_3$-free subgraph of $G$ and $B$ induces a
$P_3$-free subgraph of $G$. For short we write that a vertex $v \in V$
is \emph{\Kf} if it belongs to $A$ and \emph{\Pf} if it belongs ot $B$.
A \emph{partition} is, unless stated otherwise, a partition of $V$
into \Kf and \Pf vertices.

\subsection{Negators}

A graph with two designated vertices $x$ and $y$ is a 
\emph{\Pf\ negator} if it has no partition where both $x$ and $y$ are
\Pf, but admits a partition where at most one of the vertices $x$ and $y$
is \Pf and the \Pf vertex has no \Pf neighbour. Examples of \Pf
negators are given in Figure \ref{fig:blue neg}. In what follows we implicitly use this partition.

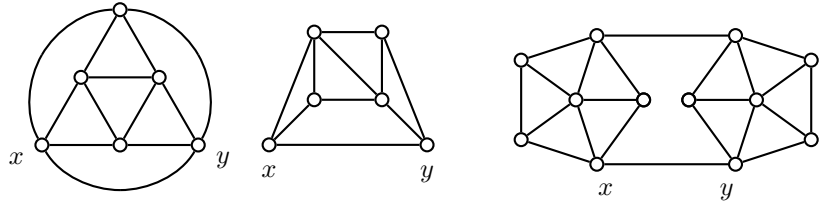
\begin{figure}[htbp]
  \psset{unit=6mm, radius=1mm}
  \hspace*{\fill}
  \begin{pspicture}(-2,-2)(+2,+2)
    \SpecialCoor \degrees[12] \psset{arcangle=-1}
    \multido{\na= 1+4, \nn=2+2}{3}{\Cnode(1;\na){i\nn}}
    \multido{\na= 1+4, \nn=2+2}{3}{\pnode(2;\na){o\nn}}
    \multido{\na=11+4, \nn=1+2}{3}{\Cnode(2;\na){o\nn}}
    \ncline{i2}{i4}    \ncline{i4}{i6}    \ncline{i6}{i2}
    \ncline{o1}{i2}    \ncline{i2}{o3}    \ncline{o3}{i4}
    \ncline{i4}{o5}    \ncline{o5}{i6}    \ncline{i6}{o1}
    \ncarc {o1}{o2}    \ncarc {o2}{o3}    \ncarc {o3}{o4}
    \ncarc {o4}{o5}    \ncarc {o5}{o6}    \ncarc {o6}{o1}
    \nput{7}{o5}{$x$}  \nput{11}{o1}{$y$}
  \end{pspicture}
  \hfill
  \begin{pspicture}(4,3.5)
    \Cnode(0.0,1.0){x} \nput{270}{x}{$x$}
    \Cnode(3.5,1.0){y} \nput{270}{y}{$y$} \ncline{x}{y}
    \Cnode(1.0,2.0){a} \ncline{a}{x}
    \Cnode(1.0,3.5){b} \ncline{b}{x} \ncline{b}{a}
    \Cnode(2.5,2.0){c} \ncline{c}{y} \ncline{c}{a} \ncline{c}{b}
    \Cnode(2.5,3.5){d} \ncline{d}{y} \ncline{d}{b} \ncline{d}{c}
  \end{pspicture}
  \hfill
  \begin{pspicture}(-3.5,-2)(+3.5,+2)
    \SpecialCoor \degrees[10]
    \rput(-2,0){\Cnode(0,0){l}
      \multido{\na=0+2, \nn=1+1}{6}{\Cnode(1.5;\na){l\nn}}
      \multido{\nm=1+1, \nn=2+1}{5}{\ncline{l\nm}{l\nn}\ncline{l\nm}{l}}
    }
    \rput(+2,0){\Cnode(0,0){r}
      \multido{\na=5+2, \nn=1+1}{6}{\Cnode(1.5;\na){r\nn}}
      \multido{\nm=1+1, \nn=2+1}{5}{\ncline{r\nm}{r\nn}\ncline{r\nm}{r}}
    }
    \ncline{l2}{r5}     \ncline{l5}{r2}
    \nput{8}{l5}{$x$}   \nput{7}{r2}{$y$}
  \end{pspicture}
  \hspace*{\fill}

  \caption{\Pf negators: the octahedron, the $P_6^2$-component 
    and the two-wheel}
  \label{fig:blue neg}
\end{figure}

Similarly, a graph with two designated vertices $x$ and $y$ is a 
\emph{\Kf\ negator} if it has no partition where both $x$ and $y$ are
\Kf, but admits a partition where at most one of the vertices $x$ and $y$
is \Kf and the \Pf vertex has no \Pf neighbour. We also implicitly use this partition.
 Examples of \Kf
negators are given in Figure \ref{fig:red neg}.

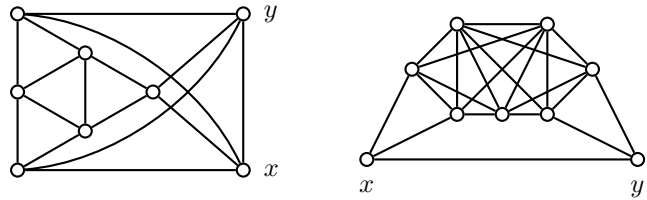
\begin{figure}[htbp]
  \psset{unit=6mm, radius=1mm}

  \hspace*{\fill}
  \begin{pspicture}(-2,-2)(+4,+2)
    \SpecialCoor \degrees[6] \psset{arcangle=-.5}
    \multido{\na=5+2, \nn=2+2}{3}{\Cnode(1;\na){i\nn}}
    \multido{\na=4+2, \nn=1+2}{3}{\Cnode(2;\na){o\nn}}
    \ncline{i2}{i4}    \ncline{i4}{i6}    \ncline{i6}{i2}
    \ncline{o1}{i2}    \ncline{i2}{o3}    \ncline{o3}{i4}
    \ncline{i4}{o5}    \ncline{o5}{i6}    \ncline{i6}{o1}
    \rput(3,0){\Cnode(2;5){x}} \nput{0}{x}{$x$}
    \ncline{x}{o1}     \ncline{x}{o3}     \ncarc {x}{o5}
    \rput(3,0){\Cnode(2;1){y}} \nput{0}{y}{$y$}
    \ncline{y}{o5}     \ncline{y}{o3}     \ncarc {o1}{y}
    \ncline{x}{y}
  \end{pspicture}
  \hfill
  \begin{pspicture}(0,-.5)(6,3.5)
    \Cnode(0,0){x}  \nput{270}{x}{$x$}
    \Cnode(6,0){y}  \nput{270}{y}{$y$} \ncline{x}{y}
    \Cnode(1,2){z1} \ncline{z1}{x}
    \Cnode(2,1){z2} \ncline{z2}{x} \ncline{z2}{z1}
    \Cnode(5,2){z3} \ncline{z3}{y}
    \Cnode(4,1){z4} \ncline{z4}{y} \ncline{z4}{z3}
    \Cnode(2,3){v1} 
    \Cnode(4,3){v2} \ncline{v1}{v2}
    \Cnode(3,1){v3} \ncline{v1}{v3} \ncline{v2}{v3}
    \multido{\nz=1+1}{4}{\multido{\nv=1+1}{3}{\ncline{z\nz}{v\nv}}}
  \end{pspicture}
  \hspace*{\fill}
  
  \caption{\Kf negators: the sun component and the bull-free component}
  \label{fig:red neg}
\end{figure}

Finally, a \emph{strong negator} is a graph that is both a \Kf negator
and a \Pf negator. Examples of strong negators, built from \Kf or \Pf
negators, are shown in Figure \ref{fig:strong neg}.

\begin{figure}[htbp]
  \psset{unit=9mm, radius=1mm}

  \hspace*{\fill}
  \begin{pspicture}(4,2)
    \SpecialCoor \degrees[6]
    \Cnode(2,1){d}
    \rput(1,1){\Cnode(0,0){c} \Cnode(1;2){a} \Cnode(1;4){b}}
    \rput(3,1){\Cnode(1;1){y} \Cnode(1;5){x}}
    \nput{0}{x}{$x$}    \nput{0}{y}{$y$}
    \ncline{a}{b} \ncline{b}{c} \ncline{c}{a}
    \ncline{d}{x} \ncline{d}{y}
    \psset{linestyle=dashed}
    \ncline{a}{d} \ncline{b}{d} \ncline{c}{d} \ncline{x}{y}
  \end{pspicture}
  \hfill
  \begin{pspicture}(2,2)
    \Cnode(0.25,1.75){b} \Cnode(1.75,1.75){y} \nput{0}{y}{$y$}
    \Cnode(0.25,0.25){a} \Cnode(1.75,0.25){x} \nput{0}{x}{$x$}
    \psset{linestyle=dashed}
    \ncline{a}{x} \ncline{x}{y} \ncline{y}{b} \ncline{b}{a}
  \end{pspicture}
  \hspace*{\fill}
  \caption{strong negators: The dashed lines represent \Pf negators in the 
    left graph and \Kf negators in the right. Their endpoints are the vertices
    $x$ and $y$ from these negators.}
  \label{fig:strong neg}
\end{figure}
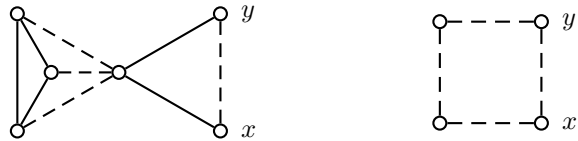

\subsection{Reduction from 3SAT}

We can now describe a generic reduction from 3SAT to our partition problem.
Let $\varphi$ be an instance of 3SAT, that is, a propositional formula in CNF
with clauses $c_1, c_2, \dots c_m$. Let $X = \{x_1, x_2,\dots,x_n\}$ be the
variables that occur in $\varphi$. We may safely assume that a variable and its negation do not occur in the same clause and that a variable does not occur more than once in the same clause. For every variable $x_i \in X$ we create a truth assignment component (tac) which is a ladder, whose edges are replaced by red or strong negators, with
$m$ rungs $x_{i,1}y_{i,1}, x_{i,2}y_{i,2}, \dots, \linebreak[1]
x_{i,m}y_{i,m}$, such that $\{x_{i,j} \mid 1 \le j \le m\}$ and
$\{y_{i,j} \mid 1 \le j \le m\}$ become independent sets in the tac. Note that the vertices $x$ and $y$ from the \Kf or strong negators that form the
ladder uniquely partition into two subsets, each of which can be either
\Kf or \Pf, see Figure \ref{fig:ladder}.
For every clause $c_j$ we create a satisfaction test component (stc) which is a $P_3$. For every
literal $x_i$ that appears in clause $c_j$ we identify the vertex
$x_{i,j}$ of the tac for $x_i$ with a vertex of the stc for $c_j$, and
the vertex $y_{i,j}$ of a tac is identified with a vertex from a stc
if $\neg x_i$ appears in $c_j$. This completes the construction of the
reduction graph $G$.

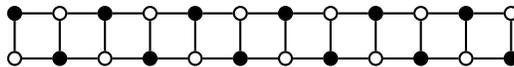
\begin{figure}[htbp]
  \centering
  \psset{unit=6mm, radius=1mm}
  \begin{pspicture}(11,1)
    \multido{\nx=0+2}{6}{\Cnode(\nx,0){l\nx} \Cnode*(\nx,1){u\nx}}
    \multido{\nx=1+2}{6}{\Cnode(\nx,1){u\nx} \Cnode*(\nx,0){l\nx}}
    \multido{\nx=0+1}{12}{\ncline{l\nx}{u\nx}}
    \multido{\nx=0+1, \nz=1+1}{11}{\ncline{l\nx}{l\nz} \ncline{u\nx}{u\nz}}
  \end{pspicture}
  \caption{A ladder with twelve rungs. In every partition,
    all black vertices belong to one part and all white vertices
    belong to the other part.
  }
  \label{fig:ladder}
\end{figure}

In case $\varphi$ is satisfiable we fix a satisfying truth assignment
of the variables in $X$. All true literals become \Kf and all false
literals become \Pf. Hence every tac is partitionable, and every stc contains at least one \Kf vertex and thus at most two (possibly adjacent) blue vertices with no other blue neighbours. This implies
 $G$ is partitionable.

Now let $G$ be partitionable. We assign the boolean value true to each 
variable $x_i$ with \Kf vertices representing the literal $x_i$ and
\Pf vertices representing $\neg x_i$, and false if the roles are the
other way around. This defines a consistent truth assignment for all
variables in $X$ because each tac is a ladder with at least two rungs.
We consider a clause $c_j$ of $\varphi$. It corresponds to a stc of $G$
which is a $P_3$. Hence one vertex of this stc is \Kf and therefore $c_j$
is satisfied.

\subsection{Planar graphs}

To show the \NP-completeness of the partition problem restricted to
planar graphs we reduce instead from planar 3SAT, which is \NP-complete~\cite{lichtenstein}, and use planar strong negators depicted in Figure \ref{fig:strong neg} whose dashed lines are either
blue negators from  Figure
\ref{fig:blue neg}. The fact that $G$ is planar can be easily derived from \cite{lichtenstein}: it suffices to contract every edge between a tac and a stc to obtain the associated (planar) graph of an instance of planar 3SAT.

\subsection{\boldmath$K_4$-free graphs}

The partition problem becomes trivial when restricted to triangle-free
graphs (these are graphs that do not contain $K_3$ as induced subgraph)
because all vertices can be made \Kf. Restricted to $K_4$-free graphs the
problem remains \NP-complete. 
The sun component in Figure \ref{fig:red neg} can be
used in the generic reduction from 3SAT we described above.

\subsection{Bull-free graphs}

The generic construction shows that the partition problem remains \NP-complete
when restricted to bull-free graphs: the graph $G$ is bull-free if the bull-free component from Figure
\ref{fig:red neg} is used in the generic reduction from 3SAT.

\subsection{Holes}

The other self-complementary graph on five vertices is $C_5$, a chordless
cycle. We will show that the partition problem remains \NP-complete for
$C_5$-free graphs, and more general for $\cC$-free graphs where $\cC$ is any
finite set of holes and a hole is a chordless cycle of length at least five.

Let $\cC$ be a finite set of holes and let $k$ be the length of the longest
cycle in $\cC$. We show that the problem remains \NP-complete for $\cC$-free graphs. In our generic reduction from 3SAT  we use
the sun component to build a ladder with $km$ rungs as tac for variable $x_i$.
As usual with \Kf negators, each stc is a $P_3$. If the literal $x_i$ appears
in clause $c_j$ we identify vertex $x_{i,jk}$ of the tac with a vertex of the
stc, and similarly for literal $\neg x_i$.

The reduction works as before. We only have to check that the reduction graph
is $\cC$-free. The sun component does not contain any holes. Hence both the
tac and the stc are hole-free. Therefore any hole in $G$ contains vertices
from different ladders. The distance between two vertices of a ladder that 
belong to other components is at least $k$. Hence $G$ does not contain
$C_5, C_6, \dots, C_k$.

\subsection{Perfect graphs}\label{perfectsection}

 Here we show $\NP$-completeness of the partition problem restricted to perfect graphs. Firstly we provide some gadgets that will be used in the reduction.

\subsubsection{Gadgets}

We use the $P_6^2$-component as the blue negator gadget shown in Figure~\ref{fig:blue neg},   and the strong negator gadget shown at the left of Figure~\ref{fig:strong neg}. 

The \textit{literal gadget} is shown in Figure \ref{fig:literal} where the double line symbolises the strong negator gadget. The gadget is partitionable and for every partition it has at least two blue endpoints. 

The \textit{propagator gadget}  is shown in Figure \ref{fig:literal}. The gadget is partitionable and for every partition it has exactly one or three blue endpoints.

Together the literal gadget and the propagator gadget form the satisfaction test component.

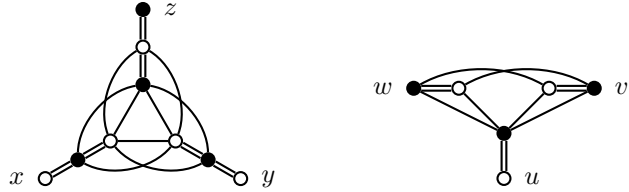
\begin{figure}[htbp]
  \centering
 \psset{radius=1mm, arcangle=40}
  \setlength{\tabcolsep}{5mm}
  \begin{tabular}{cc}

  \psset{unit=5mm}
    \begin{pspicture}(-7,-1.5)(0,4)
    \SpecialCoor \degrees[12]
    \Cnode (2;3){23}
      \multido{\nr=1+2}{2}{%
        \Cnode*(\nr;3){\nr3}
        \ncline[doubleline=true]{\nr3}{23}
      }
    \multido{\na=7+4}{2}{%
      \Cnode*(2;\na){2\na}
      \multido{\nr=1+2}{2}{%
        \Cnode (\nr;\na){\nr\na}
        \ncline[doubleline=true]{\nr\na}{2\na}
      }
    }
    \nput{6}{37}{$x$} \nput{0}{311}{$y$} \nput{0}{33}{$z$}
    \ncline{13}{17} \ncline{17}{111} \ncline{111}{13}
    \ncarc{17}{23}  \ncarc{23}{111}
    \ncarc{27}{13}  \ncarc{111}{27}
    \ncarc{211}{17} \ncarc{13}{211}
  \end{pspicture}
  &
  
   \psset{unit=2mm, radius=1mm}
    \begin{pspicture}(20,1)(-12,+5)

    \Cnode( 7,1){a}   \nput{0}{a}{$u$}
    \Cnode*( 7,4){a1}
   
    \Cnode(4, 7){c1}
    \Cnode*(1,7){c} \nput{180}{c}{$w$}
    
    \Cnode(10,7){b1}
    \Cnode*(13,7){b} \nput{360}{b}{$v$}
    
    \ncline{a1}{c} \ncline{a1}{c1} \ncline{a1}{b} \ncline{a1}{b1}
    
    \ncarc[arcangle=-30]{b1}{c} \ncarc[arcangle=30]{c1}{b}

    \psset{doubleline=true}
    \ncline{a}{a1} \ncline{b}{b1} \ncline{c}{c1}

  \end{pspicture}
  \end{tabular}
  \hspace*{\fill}  
  \caption{The literal gadget with endpoints $x, y, z$ and the propagator gadget with endpoints $u, v, w$ along with a partition where the white vertices are in the $P_3$-free part and the black vertices are in $K_3$-free part. Note that the propagator gadget is not symmetric.}
  \label{fig:literal}
\end{figure}

\subsubsection{Reduction from Positive 1-in-3-SAT} 

We describe a reduction from Positive $1$-in-$3$-SAT, which is known to be \NP-hard~\cite{schaefer},  to our partition problem on perfect graphs. An instance of Positive $1$-in-$3$-SAT is a set of  variables $X = \{x_1, x_2, \dots, x_n\}$ and a set of clauses $C = \{c_j \mid i=1,2,\dots,m\}$, such that each $c_j = (l_{j, 1} \lor l_{j, 2} \lor l_{j,3})$ consists of three positive literals and each literal $l_{j, k}$ is $x_i$ for some $x_i \in X$. The problem is to determine whether there exists a truth assignment to the variables in $X$ such that $\varphi = c_1 \land c_2 \land \dots \land c_m$ is satisfiable with exactly one true literal per clause.

For every variable $x_i \in X$ we create a truth assignment component (tac) which is a ladder, whose edges are strong negators, with
$m$ rungs $x_{i,1}y_{i,1}, x_{i,2}y_{i,2},\linebreak[1] \dots, 
x_{i,m}y_{i,m}$, such that the set $\{x_{i,j} \mid 1 \le j \le m\}$ of \emph{literal vertices} and
the set $\{y_{i,j} \mid 1 \le j \le m\}$ of \emph{propagator vertices}  become independent sets in the tac. Note that the vertices $x$ and $y$ from the strong negators that form the
ladder uniquely partition into two subsets, each of which can be either
\Kf or \Pf, see Figure \ref{fig:ladder}.
 For a clause $c = (x_1 \lor x_2 \lor x_3)$ where $x_1$, $x_2$ and $x_3$ are the
$i$'th, $j$'th and $k$'th occurrence, respectively, create a copy $H_c$ of the
literal gadget whose endpoints are identified with literal vertices $x_{1,i}, x_{2,j}$ and $x_{3,k}$,
and a copy $R_c$ of the propagator gadget whose endpoints are identified 
with propagator vertices $y_{1,i}, y_{1,j}$ and $y_{2,k}$. $H_c$ and $R_c$ are said to be the \textit{literal gadget} and \textit{propagator gadget}, respectively, of $C$. This completes the construction of the reduction graph $G$. 

\medskip

If $\varphi$ is satisfiable with exactly one true literal per clause we fix a satisfying truth assignment of the variables in $X$. All literal vertices corresponding to true literals become red and all literals vertices corresponding to false literals become blue. It is clear that every tac is partitionable. For a  literal gadget $H_c$ and a propagator gadget $R_c$ of clause $C$, $H_c$  has two blue endpoints  and $R_c$ has one blue endpoint, in which case $H_c$ and $R_c$ are partitionable. It follows that $G$ is partitionable.

Conversely, let $G$ be partitionable.  We assign boolean value true to each variable $x_i$ with red vertices representing the literal $x_i$,  and  false otherwise.  Consider the literal gadget $H_c$ and propagator gadget $R_c$ of clause $C$. If for contradiction all endpoints of $H_c$ are blue, then all endpoints of $R_c$ become red, a contradiction. Hence exactly one endpoint of $H_c$ is red, in which case $C$ has exactly one true literal, as required.

 \medskip 
We claim that $G$ is perfect. Let us first prove that $G$ contains no odd hole. The following observations follow by a careful examination of $G$.

\begin{itemize}
\item[(1)]The gadgets and tac are odd hole-free

\item[(2)]Each induced path between the endpoints of a literal or propagator gadget has even length
\end{itemize}

\noindent
 Let $C$ be an induced cycle of length at least $4$ in $G$. By (1), if $C$ is an induced subgraph of a gadget or tac then $C$ has even length. Otherwise, let  $R_1, \dots, R_k$ be induced subgraphs of tacs occurring on $C$ in that cyclic order. Clearly there exists a $2$-colouring $\phi$ of $R_1 \cup \dots \cup R_k$ where colour class $1$ are literal vertices and colour class $2$ are propagator vertices. It is easy to check that the segment $P_i$ of $C$ joining $R_i$ and $R_{i+1}$ is a path contained in a literal or propagator gadget whose endpoints are endpoints of that gadget. Since the endpoints of $P_i$ have the same colour under $\phi$ and  $P_i$ has even length by (2), $\phi$ can be extended to a $2$-colouring that includes $P_i$. This implies that $G$ contains no odd hole.

To prove that $G$ contains no odd antihole, we already established that $G$ does not contain $\ol{C_5} = C_5$.  Moreover, $G$ is $K_5$-free and hence $\ol{C_{2k+1}}$-free, $k \geq 5$. Now $K_4$ is contained in $\ol{C_7}$ (and hence $\ol{C_9}$). The only occurrences of $K_4$ in $G$ are in a literal or strong negator gadget. By considering adjacencies between such a $K_4$ and the rest of the graph it can be verified that $G$ does not contain $\ol{C_7}$ and $\ol{C_9}$.

\section{Cographs}\label{cographsection}

In this section we prove Theorem~\ref{mainthm}. We start by characterising  subclasses of partitionable cographs by forbidden induced subgraphs. These results will be useful in establishing the main theorem.

\subsection{Subclasses of partitionable cographs}

A  set of definitions and lemmas is initially required. A graph is \emph{bi-threshold} if it is bipartite or threshold. A graph is \emph{monopolar} if it is $(\infty, 1)$-partitionable. A graph is \emph{monopolar nearly split} if it is monopolar or $(1, 2)$-partitionable.




\begin{lemma}\label{first}
Let $G$ be a graph. If $G$ contains $P_3$ and $K_3$, then $G$ contains $F_1 = P_3 \cup K_3$, $F_2 = \mathrm{diamond}$, or $F_3 = \mathrm{paw}$. 
\end{lemma}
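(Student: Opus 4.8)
My plan is to fix an induced triangle $T=\{x,y,z\}$ and an induced $P_3$ on $\{a,b,c\}$ with edges $ab,bc$ and non-edge $ac$, and to control their interaction by examining, for each vertex $v\notin T$, the quantity $|N(v)\cap T|$. The observation driving everything is that adjoining a single vertex $v$ to the triangle $T$ produces an induced paw when $|N(v)\cap T|=1$ and an induced diamond when $|N(v)\cap T|=2$. Hence if \emph{any} vertex outside $T$ has exactly one or exactly two neighbours in $T$, we are immediately done with $F_3$ or $F_2$. The rest of the argument therefore runs under the assumption that every vertex outside $T$ is adjacent to \emph{all} of $T$ or to \emph{none} of $T$.

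Under this all-or-nothing assumption I would first note that at most one vertex of the $P_3$ can lie in $T$: two of them would have to be $\{a,b\}$ or $\{b,c\}$ since $a\not\sim c$ and $T$ is a clique, but then the third $P_3$-vertex is adjacent to exactly one or two vertices of $T$, contradicting the assumption. This leaves two cases. If no vertex of the $P_3$ lies in $T$, then either all of $a,b,c$ are non-adjacent to $T$, whence $\{a,b,c\}\cup T$ induces exactly $F_1=P_3\cup K_3$; or some $P_3$-vertex is adjacent to all of $T$. In the latter situation, if some edge of the $P_3$ joins a vertex adjacent to all of $T$ to a vertex adjacent to none of $T$, those two vertices together with two vertices of $T$ induce a paw; otherwise all of $a,b,c$ are adjacent to all of $T$ and $\{a,c,x,y\}$ induces a diamond, the unique non-edge being $ac$. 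If exactly one $P_3$-vertex lies in $T$, I distinguish the middle vertex from an endpoint: if $b\in T$, then both $a$ and $c$ are forced adjacent to all of $T$ and $\{a,c\}$ with the two remaining triangle vertices induces a diamond; if an endpoint, say $a\in T$, then $b$ is forced adjacent to all of $T$ while $c$ is forced adjacent to none (because $c\not\sim a$), and $c,b$ with the two remaining triangle vertices induce a paw.

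The argument is a finite case analysis, so the work is organisational rather than conceptual: the key simplifications are the paw/diamond dichotomy for a single vertex attached to $T$, which disposes of all partial adjacencies at once, and the resulting restriction to at most one $P_3$-vertex inside $T$. The one place needing genuine care is the case where no $P_3$-vertex lies in $T$ but some vertex is adjacent to all of $T$; there I must verify that the pattern of adjacent/non-adjacent vertices along the path always yields a paw, a diamond, or (when all three path vertices miss $T$) the disjoint union $F_1$. I do not anticipate any obstruction, since each subcase exhibits an explicit four- or six-vertex induced copy of $F_1$, $F_2$, or $F_3$.
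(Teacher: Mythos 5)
Your proposal is correct and rests on the same key observation as the paper's proof: a vertex with exactly one or two neighbours in the triangle immediately yields a paw or a diamond. The only difference is in the finishing step --- the paper concludes that otherwise every triangle lies in a clique with no external neighbours (so the $P_3$ is anticomplete to it, giving $F_1$), while you reach the same trio of outcomes by an explicit case analysis on how the fixed $P_3$ meets the fixed triangle; both completions are valid and essentially equivalent.
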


\begin{proof}
Consider the triangle. If there is a vertex with exactly one or two neighbours in the triangle we have $F_3$ or $F_2$, respectively. If two non-adjacent vertices with three neighbours in the triangle exist we have $F_2$. If none of these cases applies to any triangle in $G$ then all triangles form a clique with no neighbours in the rest of the graph. Consequently we find $F_1$.
 \end{proof}

\begin{lemma}\label{second}
Let $G$ be a cograph. If $G$ contains $P_3$ and $2K_2$, then $G$ contains $Q_1 = P_3 \cup K_2$, or $Q_2 = \mathrm{butterfly}$. 
\end{lemma}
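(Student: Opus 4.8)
My plan is to argue by minimal counterexample (equivalently, by induction on $|V(G)|$) and to exploit the basic structure theorem for cographs. Since $G$ contains a $P_3$ it has at least three vertices, so by the cograph property either $G$ is disconnected or $G$ is a join $G = A \oplus B$ with $A, B \neq \varnothing$. I would handle these two cases separately; the whole argument hinges on how an induced $2K_2$ (say on vertices $a,b,c,d$ with edges $ab$ and $cd$) can sit inside such a decomposition.

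I would dispose of the connected case first, where $G = A \oplus B$. The key observation is that an induced $2K_2$ cannot straddle the join: an induced $2K_2$ has maximum degree $1$, yet every vertex of $A$ is adjacent to every vertex of $B$, so one of the four vertices lying in $A$ would be adjacent to all of the four that lie in $B$, forcing at most one of the four into $B$; symmetrically at most one lies in $A$, which is impossible for four vertices unless they all lie on one side. Hence the $2K_2$ lies wholly in $A$ or wholly in $B$, say in $A$. Picking any $z \in B$, the vertex $z$ is complete to $\{a,b,c,d\}$, so $\{z,a,b,c,d\}$ induces $K_1 \oplus 2K_2$, which is exactly the butterfly $Q_2$. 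Notably this case does not even use the hypothesis that $G$ contains a $P_3$.

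For the disconnected case I would use the $P_3$. Being connected, the $P_3$ lies inside a single component $G_1$, and each edge of the $2K_2$ also lies inside a single component. If one edge of the $2K_2$ lies in a component other than $G_1$, that edge shares no vertex with and has no edge to the $P_3$, so the two together induce $P_3 \cup K_2 = Q_1$. Otherwise both edges lie in $G_1$, so $G_1$ is a strictly smaller cograph still containing an induced $P_3$ and an induced $2K_2$; minimality (the induction hypothesis) then yields $Q_1$ or $Q_2$ inside $G_1 \subseteq G$.

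The argument is short, and I expect no serious obstacle, since the cograph property reduces everything to the join and the disconnected case. The one step deserving care is the claim that an induced $2K_2$ must lie wholly on one side of a join, which I would justify by the degree count sketched above; the only other point to watch is the bookkeeping in the disconnected case, ensuring the recursion is applied to a strictly smaller graph that still satisfies both hypotheses.
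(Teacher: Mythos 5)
Your proof is correct, but it takes a genuinely different route from the paper's. The paper argues directly inside the graph: it takes the component $G_1$ containing one edge $e_1$ of the $2K_2$, and if $e_2$ also lies in $G_1$ it invokes the fact that induced paths between two vertices in a component of a cograph have length at most $2$ to produce a single vertex adjacent to all four endpoints (hence a butterfly), while if $e_2$ lies elsewhere it analyses $G_1$ to exhibit a $P_3$ meeting $e_1$, or concludes $G_1$ is a clique component and pairs $e_1$ with the $P_3$ guaranteed elsewhere. You instead induct on the cotree: in the connected case you observe that an induced $2K_2$ cannot straddle a join, so any vertex from the other side of $G = A \oplus B$ yields $K_1 \oplus 2K_2$, the butterfly; in the disconnected case you anchor on the component containing the $P_3$ and either separate off a $K_2$ to get $Q_1$ or recurse. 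Your connected-case argument is cleaner than the paper's, since it replaces the slightly delicate step of upgrading a common neighbour of one pair to a common neighbour of all four endpoints (which the paper leaves to the reader via $P_4$-freeness) with a structural fact about joins, and it isolates the stronger statement that every connected cograph containing $2K_2$ contains a butterfly; the same join-or-union template also applies almost verbatim to the neighbouring Lemmas. What the paper's version buys is a self-contained, induction-free argument that never leaves the original graph and only ever examines the component containing $e_1$. Both proofs are sound; only minor bookkeeping differs in the disconnected case, and your recursion is well-founded since a component of a disconnected graph is strictly smaller and your recursive call immediately lands in the connected case.
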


\begin{proof}
Consider the disjoint edges $e_1$ and $e_2$ in $2K_2$. Let $G_1$ be the component that contains $e_1$. First suppose $G_1$ contains $e_2$. Let $v$ be a vertex adjacent to some endpoint of $e_1$ and on a path between $e_1$ and $e_2$. Since $G$ is a cograph any induced path between two vertices in a component of $G$ has length at most $2$. As $e_1$ and $e_2$ have no edges between them every induced path between $e_1$ and $e_2$ has length $2$. It follows that $v$ must be adjacent to every vertex in $e_1$ and $e_2$, in which case $Q_2$ is found. 
Finally suppose $G_1$ does not contain $e_2$. If there is a vertex with exactly one neighbour in $e_1$ then $Q_1$ is obtained. If this case does not apply to any vertex in $G_1$ then $G_1$ forms a clique with no neighbours in the rest of the graph and $Q_1$ is again obtained.
 \end{proof}

\begin{lemma}\label{third}
Let $G$ be a cograph. If $G$ is $C_4$-free and contains $P_3, 2K_2$ and $K_3$, then $G$ contains $S_1 = F_1$, $S_2 = Q_2$, $S_3 = K_2 \cup \mathrm{paw}$, or $S_4 = K_2 \cup \mathrm{diamond}$. 
\end{lemma}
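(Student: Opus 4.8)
The plan is to proceed by cases based on how the components of $G$ distribute the required induced subgraphs $P_3$, $2K_2$, and $K_3$, exploiting heavily the facts already proved in Lemmas~\ref{first} and~\ref{second} together with the structural rigidity of cographs. Since $G$ contains $K_3$ and $P_3$, Lemma~\ref{first} already gives us $F_1$, a diamond, or a paw; similarly Lemma~\ref{second} gives us $Q_1 = P_3 \cup K_2$ or $Q_2 = \text{butterfly}$. The target graphs $S_1, \dots, S_4$ are $F_1$, $Q_2$, $K_2 \cup \text{paw}$, and $K_2 \cup \text{diamond}$, so if we land directly on $F_1$ or on $Q_2$ we are done immediately. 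Thus the real content is showing that the remaining outcomes — a diamond or a paw from Lemma~\ref{first}, combined with the absence of $C_4$ and the presence of a disjoint $2K_2$ — force one of $S_3 = K_2 \cup \text{paw}$ or $S_4 = K_2 \cup \text{diamond}$.

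First I would invoke Lemma~\ref{first} and assume we are not in the $F_1$ case, so $G$ contains a diamond or a paw; call this subgraph $D$ (on four vertices, containing the triangle $T$). I would then use the $2K_2$ to extract an edge $e$ that can be made disjoint from $D$. The key observation is that $K_2 \cup \text{diamond}$ and $K_2 \cup \text{paw}$ are precisely a diamond or paw together with a \emph{disjoint} $K_2$ having no edges to it, so the crux is producing such a disjoint edge. I would argue that because $G$ is a cograph and $C_4$-free, the adjacency pattern between an arbitrary edge of $G$ and the vertices of $D$ is severely constrained: any vertex with one or two neighbours in the triangle $T$ would, via Lemma~\ref{first}'s argument, already produce a diamond or paw, and having two vertices of an edge each attached to $D$ risks creating a $C_4$ (ruled out) or a $P_4$ (ruled out since $G$ is a cograph). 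The cograph condition, in particular that induced paths within a component have length at most $2$, is exactly what lets me conclude that an edge of the $2K_2$ lying in a different component, or suitably far from $D$, is genuinely disjoint and nonadjacent.

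The main steps, in order, are: (i) dispose of the $F_1$ and $Q_2$ outcomes of the two prior lemmas as immediate; (ii) fix a diamond or paw $D$ and, using the $2K_2$ hypothesis, locate an edge $e$; (iii) analyse the component structure — if $e$ lies in a component disjoint from $D$ we are essentially done, so the delicate case is when $e$ and $D$ share a component; (iv) in that shared-component case, use $C_4$-freeness and the cograph path-length bound to show that the vertices of $e$ cannot attach to $D$ in any way that avoids creating a forbidden $C_4$ or $P_4$, forcing $e$ to be either absorbable into a larger diamond/paw (giving $S_4$ or $S_3$ after peeling off a clean edge) or cleanly separated. I expect the hardest part to be the careful bookkeeping in step~(iv): enumerating how the two endpoints of $e$ can see the four vertices of $D$ while respecting both the no-$C_4$ and no-$P_4$ (cograph) constraints simultaneously, and checking in each surviving subcase that a clean disjoint edge can be split off to yield exactly $K_2 \cup \text{paw}$ or $K_2 \cup \text{diamond}$ rather than some larger or differently-structured graph. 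The cograph complementation-disconnection property (that a cograph or its complement is disconnected) may give a cleaner route through this case analysis than a brute-force vertex-by-vertex check.
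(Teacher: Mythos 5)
Your overall strategy --- obtain a diamond or paw $D$ from Lemma~\ref{first} and then use the component structure of $G$ to produce an edge of the $2K_2$ that is disjoint from and non-adjacent to $D$ --- is genuinely different from the paper's proof, which instead grows the diamond or paw around one edge of the $2K_2$ itself, and your route can be made to work (arguably more cleanly). However, step~(iv) as written contains a real gap. The claim that when an edge $e$ of the $2K_2$ shares a component with $D$ ``the vertices of $e$ cannot attach to $D$ in any way that avoids creating a forbidden $C_4$ or $P_4$'' is false: the butterfly $Q_2$ is itself a $C_4$-free, $P_4$-free graph in which an edge of one triangle attaches perfectly legally to the paw formed by the other triangle and the centre, and among its five vertices there is no ``clean edge'' to peel off. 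So the proposed local analysis of $e$ against $D$ cannot force $S_3$ or $S_4$; in this situation the correct outcome is the butterfly itself, i.e.\ $S_2$, and your sketch gives no mechanism for producing it.

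The fix is to change the case split from ``does $e$ share a component with $D$?'' to ``do the two edges $e_1,e_2$ of the $2K_2$ lie in the same component?''. If they lie in different components, then $D$, being connected, misses at least one of them, and that edge yields $K_2\cup D$, which is $S_3$ or $S_4$, with no adjacency analysis needed. If they lie in the same component $G_1$, forget $D$ entirely: $G_1$ is a connected cograph, hence a join $G_1[A]\oplus G_1[B]$; the four vertices of the $2K_2$ must all lie on one side of the join (any split would create an edge between $e_1$ and $e_2$), so any vertex on the other side is adjacent to all four and yields the butterfly $S_2$. This is precisely the ``complementation-disconnection'' route you gesture at in your last sentence, and it is also the content of the paper's appeal to the argument of Lemma~\ref{second} inside its own proof of Lemma~\ref{third}; you need to commit to it rather than to the vertex-by-vertex attachment analysis. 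A minor further point: the $Q_1$ outcome of Lemma~\ref{second} is not among the targets $S_1,\dots,S_4$ and is not actually disposed of by your step~(i); in the corrected argument Lemma~\ref{second} is not needed at all.
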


\begin{proof}
Consider the disjoint edges $e_1$ and $e_2$ in $2K_2$. Let $G_1$ be the component containing $e_1$. If $G_1$ contains $e_2$ then, by the same argument as in the proof of Lemma \ref{second}, we find $S_2$. So suppose $G_1$ does not contain $e_2$. We distinguish a number of cases. 
If there exists two non-adjacent vertices with two neighbours in $e_1$ then $S_4$ is obtained. 
If there exists two non-adjacent vertices with one and two neighbours, respectively, in $e_1$ then $S_3$ is obtained. 
If there exists two adjacent vertices with one and two neighbours, respectively, in $e_1$ then $S_4$ is found.  If none of these cases applies to any edge in $G_1$ then, by considering the absence of $P_4$ and $C_4$, $G_1$ either forms (i) a  star graph with no neighbours in the rest of the graph, or (ii) a clique with no neighbours in the rest of the graph. In the case of (i) we find $S_1$. In the case of (ii) if $G_1$ contains a triangle then $S_1$ is obtained and if $G_1$ is a single edge we find $S_1, S_3$ or $S_4$, by Lemma \ref{first}. 
 \end{proof}

\begin{lemma}\label{fourth}
Let $G$ be a cograph. If $G$ contains $P_3$ and $2K_3$, then $G$ contains $W_1 = 2K_3 \cup P_3$, $W_2 = K_3 \cup \mathrm{diamond}$, $W_3 = K_3 \cup \mathrm{paw}$, or $W_4 = K_1 \oplus 2K_3$. 
\end{lemma}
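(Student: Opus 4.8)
The plan is to split on whether the two triangles of the $2K_3$ lie in the same component of $G$, mirroring the structure of the proofs of Lemmas~\ref{second} and~\ref{third}. Write the $2K_3$ as $T_1 \cup T_2$ with $T_1 = \{a_1,a_2,a_3\}$ and $T_2 = \{b_1,b_2,b_3\}$, and let $G_1$ be the component of $G$ containing $T_1$. The main dichotomy is whether $G_1$ also contains $T_2$.

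First I would treat the case where $G_1$ contains $T_2$. Since $G$ is a cograph, any induced path between two vertices of a component has length at most $2$; as $T_1$ and $T_2$ have no edges between them, there is a vertex $v$ adjacent to, say, $a_1$ and $b_1$. The key claim is that $v$ is in fact adjacent to all six vertices of $T_1 \cup T_2$. Indeed, for any $a_i$ with $i \ne 1$ the four vertices $a_i, a_1, v, b_1$ would induce a $P_4$ unless $v \sim a_i$, because $a_i a_1$, $a_1 v$ and $v b_1$ are edges while $a_1 b_1$ and $a_i b_1$ are non-edges (they join distinct triangles of the $2K_3$); the same argument with the roles of $T_1$ and $T_2$ exchanged forces $v$ adjacent to $b_2$ and $b_3$. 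Then $\{v\} \cup T_1 \cup T_2$ induces $K_1 \oplus 2K_3 = W_4$.

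In the remaining case $T_1$ and $T_2$ lie in distinct components, so they are vertex-disjoint with no edges between them. Here I would locate the $P_3$, which, being connected, lies in a single component. If that component is disjoint from both triangles, then $P_3 \cup T_1 \cup T_2$ induces $W_1 = 2K_3 \cup P_3$. Otherwise the $P_3$ lies in the component of one of the triangles, say in $G_1$ together with $T_1$; then $G_1$ contains both $P_3$ and $K_3$, so by Lemma~\ref{first} it contains $F_1 = P_3 \cup K_3$, $F_2 = \mathrm{diamond}$, or $F_3 = \mathrm{paw}$. Taking the disjoint union of this subgraph with the triangle $T_2$ (which lies in another component and hence has no edges to it) yields $W_1$, $W_2 = K_3 \cup \mathrm{diamond}$, or $W_3 = K_3 \cup \mathrm{paw}$, respectively. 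This case analysis is exhaustive, since each triangle and the $P_3$ is connected and therefore confined to a single component.

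I expect the only delicate point to be the adjacency argument in the first case: one must verify that the enforced $P_4$s are genuinely induced, i.e.\ that the relevant pairs ($a_i b_1$ and $a_1 b_1$, together with their $T_2$-analogues) are non-edges, which is exactly guaranteed by the $2K_3$ having no edges between its two triangles. Everything else reduces cleanly to Lemma~\ref{first} and to the fact that induced paths within a component of a cograph have length at most $2$.
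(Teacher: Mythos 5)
Your proof is correct and follows essentially the same route as the paper's: the paper also splits on whether the two triangles share a neighbour (equivalently, for a cograph, lie in the same component), obtains $W_4$ from $P_4$-freeness in the first case, and falls back on the argument of Lemma~\ref{first} in the second. Your write-up merely spells out the details the paper leaves implicit, invoking Lemma~\ref{first} as a black box on the component containing the $P_3$ rather than repeating its case analysis.
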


\begin{proof}
Consider the disjoint triangles $t_1$ and $t_2$ in $2K_3$. If $t_1$ and $t_2$ share a neighbour then, by considering the absence of $P_4$, $W_4$ is obtained. Otherwise, by a similar argument to that in Lemma \ref{first}, we find $W_1$, $W_2$, or $W_3$.  
 \end{proof}

\subsubsection{Bi-threshold cographs}\label{1}

This section establishes the following theorem.

\begin{theorem}\label{btheorem}
Let $G$ be a  connected cograph. Then $G$ is bi-threshold if and only if $G$ does not contain the graphs $B_1, \dots, B_6$ depicted in Figure \ref{btgraphs}.
\end{theorem}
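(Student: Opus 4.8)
The plan is to work with two reformulations that hold because $G$ is a cograph: $G$ is bipartite exactly when it is $K_3$-free (any odd hole of length at least five contains an induced $P_4$), and $G$ is threshold exactly when it is $\{2K_2, C_4\}$-free (the absence of $P_4$ being automatic). Hence a cograph fails to be bi-threshold precisely when it contains $K_3$ and, in addition, contains $2K_2$ or $C_4$. Since both bipartite and threshold graphs are closed under induced subgraphs, so is their union, i.e.\ the class of bi-threshold graphs is hereditary. The ``only if'' direction is then immediate: one checks from Figure~\ref{btgraphs} that every $B_i$ contains a triangle and contains a $2K_2$ or a $C_4$, so no $B_i$ is bi-threshold, and by heredity a bi-threshold graph contains none of them.

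For the ``if'' direction I would prove the contrapositive: a connected cograph $G$ that is not bi-threshold contains some $B_i$. By the reformulation above $G$ contains $K_3$; since a connected cograph with no induced $P_3$ is a clique (hence threshold), $G$ also contains $P_3$; and $G$ contains $2K_2$ or $C_4$. I would then split according to whether $G$ is $C_4$-free.

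If $G$ is $C_4$-free it contains $P_3$, $K_3$ and $2K_2$, so Lemma~\ref{third} applies and yields one of $S_1,\dots,S_4$; apart from the butterfly $S_2$ these all contain an induced $K_3 \cup K_2$. Connectivity is then used in an essential way: writing a connected cograph as a join $G = G_1 \oplus G_2$, the five vertices of any induced $K_3 \cup K_2$ must lie in a single factor, since their graph of non-adjacencies is $K_{2,3}$, which is connected; a vertex of the other factor is complete to these five vertices and, with one edge of the $K_3$ and the $K_2$, induces a butterfly. Thus $G$ contains a graph of Figure~\ref{btgraphs}, and the same device lets one replace every disconnected configuration produced by Lemmas~\ref{first}--\ref{fourth} by a connected witness.

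The case in which $G$ contains an induced $C_4$, say on $p,q,r,s$ with edges $pq,qr,rs,sp$, is the one I expect to be the main obstacle, since Lemma~\ref{third} is no longer available. I would analyse directly how a triangle of $G$ can meet this $4$-cycle inside a cograph. No triangle can use a diagonal $pr$ or $qs$, as these are non-edges; a triangle sharing an edge, say $pq$, forces its third vertex to be adjacent to a further cycle vertex (otherwise an induced $P_4$ appears), and a short check of the apex's adjacencies to $r$ and $s$ shows the five vertices induce either the $5$-wheel $C_4 \oplus K_1$ or the graph $\ol{P_3 \cup K_2} = 2K_1 \oplus (K_2 \cup K_1)$; a triangle meeting the cycle in at most one vertex yields a larger configuration already containing one of the earlier obstructions. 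The delicate points are therefore (i) carrying out this $C_4$-analysis exhaustively, so that no configuration is overlooked and the graphs arising are precisely those of Figure~\ref{btgraphs}, and (ii) the bookkeeping of matching each subgraph output by Lemmas~\ref{first}--\ref{fourth} to a specific $B_i$. Combining the two cases shows every connected non-bi-threshold cograph contains some $B_i$, completing the proof.
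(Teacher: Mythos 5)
Your skeleton matches the paper's: reduce ``not bi-threshold'' for a cograph to ``contains $K_3$, plus $C_4$ or $2K_2$'', dispose of the easy direction by heredity, and split the hard direction on whether $G$ is $C_4$-free. In the $C_4$-free branch your use of Lemma~\ref{third} is exactly the paper's Case~2, but you do unnecessary work afterwards: $S_1$, $S_3$ and $S_4$ \emph{are} $B_5$, $B_6$ and $B_4$ of Figure~\ref{btgraphs}, so no conversion to a butterfly is needed. Your join argument (the non-adjacency graph of $K_3\cup K_2$ is a connected $K_{3,2}$, so the five vertices sit in one factor of $G=G_1\oplus G_2$ and any vertex of the other factor completes a butterfly) is correct, and it in fact shows that $B_4,B_5,B_6$ are redundant in the list for connected cographs; that is a legitimate strengthening, but it is extra machinery the paper does not need.

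The genuine gap is in the $C_4$ branch. The paper simply asserts that, since $G$ is connected and $P_4$-free, \emph{some} triangle and \emph{some} quadrangle share an edge, and then runs exactly your apex-adjacency check to land on $B_2$ or $B_3$ (your identification of the two outcomes as $C_4\oplus K_1$ and $2K_1\oplus(K_2\cup K_1)$ is correct). You instead fix one triangle and one $C_4$ and case-split on their intersection, and for the subcase ``at most one common vertex'' you only assert that a larger configuration ``already containing one of the earlier obstructions'' appears. That is the whole difficulty of this branch and no argument is given; it is not automatic, because the resolution typically requires producing a \emph{different} triangle--quadrangle pair that do share an edge (e.g.\ in $K_3\oplus C_4$ the chosen triangle is disjoint from the chosen quadrangle, and one must switch to a triangle formed by a triangle-vertex and a cycle-edge). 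The gap is fillable: $P_4$-freeness forces any vertex off an induced $C_4$ to have zero, two opposite, three, or four neighbours on it; three or four neighbours give $B_3$ or $B_2$ at once, two opposite neighbours for a triangle-vertex either create a new $C_4$ through an edge of the triangle (reducing to the shared-edge case) or give $B_3$, and the all-zero case yields $K_3\cup C_4\supseteq K_3\cup P_3=B_5$. Until this subcase is written out (or replaced by a proof of the paper's ``some triangle and some quadrangle share an edge'' claim, which can be done via the join decomposition), the $C_4$ branch of your proof is incomplete.
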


(1) $B_1 = \mathrm{butterfly}$.

(2) $B_2 = C_4 \oplus K_1$.

(3) $B_3 = 2K_1 \oplus (K_2 \cup K_1)$.

(4) $B_4 = K_2 \cup \mathrm{diamond}$.

(5) $B_5 =  K_3 \cup P_3$.

(6) $B_6 = K_2 \cup \mathrm{paw}$.

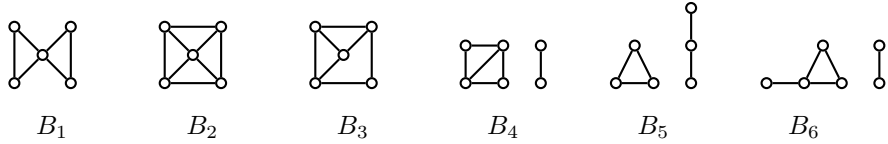
\begin{figure}[htbp]
  \centering
  \psset{radius=0.8mm, arcangle=30}
  \setlength{\tabcolsep}{5mm}
  \begin{tabular}{ccccccccc}

       \psset{unit=5mm}
       \begin{pspicture}(0,-.5)(2,2.5)
       \Cnode(0, 0){0}
       \Cnode(0, 1.5){1}
       \Cnode(0.75, 0.75){c}
       \Cnode(1.5, 0){2}
       \Cnode(1.5, 1.5){3}

     \ncline{0}{c} \ncline{1}{c} \ncline{2}{c} \ncline{3}{c} \ncline{2}{3} \ncline{0}{1}

      \end{pspicture}
       &
       \psset{unit=5mm}
       \begin{pspicture}(0,-.5)(2,2.5)
      \multido{\nx=0+1.5}{2}{\multido{\ny=0+1.5}{2}{\Cnode(\nx,\ny){\nx\ny}}}

      \Cnode(0.75, 0.75){c}

       \multido{\nx=0+1.5}{2}{\multido{\ny=0+1.5}{2}{ \ncline{\nx\ny}{c}}}

       \ncline{00}{1.50}
       \ncline{00}{01.5}
       \ncline{1.51.5}{01.5} \ncline{1.51.5}{1.50}

       \end{pspicture}
       &
       \psset{unit=5mm}
    \begin{pspicture}(0,-.5)(2,2.5)
      \multido{\nx=0+1.5}{2}{\multido{\ny=0+1.5}{2}{\Cnode(\nx,\ny){\nx\ny}}}

      \Cnode(0.75, 0.75){c}

       
       \ncline{00}{c} \ncline{01.5}{c} \ncline{1.51.5}{c}

       \ncline{00}{1.50}
       \ncline{00}{01.5}
       \ncline{1.51.5}{01.5} \ncline{1.51.5}{1.50}
         
       \end{pspicture}
       %
       &
       \psset{unit=5mm}
    \begin{pspicture}(0,-.5)(2,2.5)
      \Cnode(0, 0){00}
      \Cnode(0,1){01}
      \Cnode(1,0){10}
      \Cnode(1,1){11}
      
      \Cnode(2,0){20}
      \Cnode(2,1){21}


       \ncline{00}{11}
       \ncline{00}{10}
       \ncline{00}{01}
       \ncline{11}{01} \ncline{11}{10}
       
       \ncline{20}{21}
         
       \end{pspicture}
       &
       \psset{unit=5mm}
    \begin{pspicture}(0,-.5)(2,2.5)
      \Cnode(0, 0){00}
      \Cnode(1,0){10}
      \Cnode(0.5,1){x}
      \Cnode(2,0){u}
      \Cnode(2, 1){v}
      \Cnode(2, 2){w}

      \ncline{00}{10} \ncline{00}{x} \ncline{10}{x} \ncline{u}{v} \ncline{v}{w}

       \end{pspicture}
      
      & 
      \psset{unit=5mm}
    \begin{pspicture}(0,-.5)(2,2.5)
      \Cnode(0, 0){00}
      \Cnode(1,0){10}
      \Cnode(2,0){20}
      \Cnode(1.5,1){x}
      
      \Cnode(3,0){30}
      \Cnode(3,1){31}
      
      \ncline{30}{31}
      
      \ncline{00}{10} \ncline{10}{20} \ncline{x}{10} \ncline{x}{20}

       \end{pspicture}

       \\
       $B_1$ & $B_2$ & $B_3$ & $B_4$ & $B_5$ & $B_6$

       \end{tabular}
  \caption{The graphs $B_1, B_2, B_3, B_4, B_5$ and $B_6$}
  \label{btgraphs}
\end{figure}

\begin{proof}
($\Leftarrow$) Recall that a threshold graph is $(C_4, P_4, 2K_2)$-free and a bipartite graph is triangle-free. But the graphs $B_1, \dots, B_6$ each contain a triangle, and $C_4$ or $2K_2$. 

($\Rightarrow$) Let $G$ be a connected cograph that is neither bipartite nor threshold and vertex minimal. If $G$ is complete the result is easily seen to be true. So suppose that $G$ contains $P_3$. In particular $G$ must contain $K_3$, and $C_4$ or $2K_2$. We distinguish two cases.

\medskip
\noindent
\textbf{Case 1}:  $G$ contains $C_4$.
\medskip

\noindent
Since $G$ is connected and $P_4$-free there exists a triangle and a quadrangle that share
an edge. The third vertex of the triangle has another neighbour in the
quadrangle, otherwise there would be a $P_4$. Consequently $G$ contains $B_2$ or $B_3$.

\medskip
\noindent
\textbf{Case 2}: $G$ contains $2K_2$.
\medskip

\noindent
By Lemma \ref{third}, $G$ contains $B_1, B_4$, $B_5$ or $B_6$. This completes the proof.
 \end{proof}

\subsubsection{Monopolar cographs}\label{2}

In \cite{ekim} a forbidden induced subgraph characterization of monopolar cographs, defined in the paper as $(s, k)$-polar cographs where $\min(s, k) \leq 1$, is presented. (Note that our definition of monopolar graphs is different). Essentially, the same proof shows the following result.

\begin{theorem}\label{monopolarthm1}
Let $G$ be a connected cograph. Then $G$ is monopolar if and only if $G$ has no induced subgraph isomorphic to the graphs $J_1$, \dots, $J_4$ depicted in Figure \ref{jgraphs}. \end{theorem}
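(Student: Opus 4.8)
The plan is to establish the two directions of the biconditional separately, using that monopolarity is preserved under taking induced subgraphs and that a connected cograph other than $K_1$ is a join. For the ``only if'' direction, observe that if $\{A,B\}$ is a monopolar partition of $G$, then its restriction $\{A \cap W, B \cap W\}$ to any vertex subset $W$ is again a partition into a $P_3$-free (cluster) part and an edgeless part; hence the class of monopolar cographs is closed under induced subgraphs. It therefore suffices to check that none of $J_1,\dots,J_4$ is itself monopolar, which is a finite verification: for each $J_i$ one argues that every way of splitting its vertices into a cluster part and an independent part leaves a $P_3$ in the former or an edge in the latter.

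For the ``if'' direction I would take a vertex-minimal connected cograph $G$ that is $\{J_1,\dots,J_4\}$-free but not monopolar and derive a contradiction. As $G \neq K_1$ is a connected cograph, $\overline{G}$ is disconnected, so $G = G_1 \oplus \dots \oplus G_t$ with $t \geq 2$, where each co-component $G_i$ is either $K_1$ or a disconnected cograph. The structural heart of the argument is that any independent set in $G$ meets at most one co-component (distinct co-components are completely joined), so in a hypothetical monopolar partition $\{A,B\}$ the set $B$ lies inside a single $G_i$; and whenever the complementary part $A$ meets two co-components it is connected, hence, being a cluster graph, a single clique, which forces every co-component other than $G_i$ to be $K_1$. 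This confines the possible monopolar partitions to two regimes --- either $t = 2$ with one side edgeless and the other a cluster graph, or $G = G_i \oplus K_s$ with $G_i$ minus a nonempty independent set a clique --- and I would then show that if $G$ admits neither, a forced configuration inside $G_i$ (located using the small-subgraph extraction lemmas, e.g.\ Lemma~\ref{first}) or across co-components produces an induced copy of some $J_i$, contradicting $\{J_1,\dots,J_4\}$-freeness. As the author indicates, this parallels the polar-cograph characterisation of \cite{ekim}, with the two parts relabelled to match the present definition of monopolar.

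The main obstacle will be the case analysis in the reverse direction: one must verify that these four graphs, and no fifth, capture every obstruction to a cluster-plus-independent partition of a join of co-components. The delicate point is to show that whenever $G_i$ forces a $P_3$ into every cluster candidate while the remaining co-components simultaneously force an edge into the independent part, the witnessing vertices can always be extracted as an induced $J_i$, and to confirm the minimality of each $J_i$ so that the list is both sufficient and irredundant.
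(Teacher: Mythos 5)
Your overall skeleton matches the paper's: the necessity direction is exactly the hereditary-closure-plus-finite-check argument the paper gives, and for sufficiency the paper likewise starts from the fact that a connected cograph is a join of two cographs. Your structural observations about co-components are correct as far as they go: an independent part with at least two vertices lies in a single co-component; a cluster part meeting two co-components is connected, hence a clique; and a co-component inducing a clique must be $K_1$ since a co-component is $K_1$ or disconnected. Your ``two regimes'' are therefore a valid reformulation of when a connected cograph is monopolar.

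The problem is that the proof of the hard direction stops exactly where the theorem begins. The entire content of sufficiency is the verification that a connected cograph falling into neither regime contains one of $J_1,\dots,J_4$, and you explicitly defer this (``I would then show\dots'', ``The main obstacle will be the case analysis''). This is not a routine extraction: the paper spends essentially all of its proof on it, splitting the join $G = G[A]\oplus G[B]$ according to whether each factor is threshold (i.e.\ $(C_4,P_4,2K_2)$-free), contains $C_4$, contains $2K_2$, contains a triangle, or contains $P_3$, and invoking Lemma~\ref{second} (not Lemma~\ref{first}, which is the one you cite) to upgrade ``contains $P_3$ and $2K_2$'' to ``contains $Q_1 = P_3\cup K_2$ or $Q_2 = \mathrm{butterfly}$'' before joining with the other factor to land on a specific $J_i$. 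Without carrying out this analysis, or an equivalent one in your co-component language, you have not shown that the list of four graphs is complete, which is the statement being proved. As written, the proposal is a correct plan whose decisive step is missing.
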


(1) $J_1 = 5-\mathrm{wheel}$.


(2) $J_2 = K_1 \oplus (P_3 \cup K_2)$.

(3) $J_3 = K_2 \oplus 2K_2$.

(4) $J_4 = (K_2 \cup K_1) \oplus (K_2 \cup K_1)$.

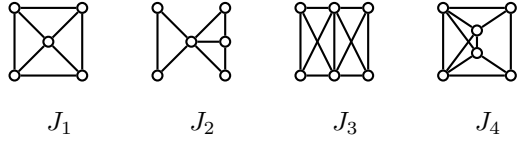
\begin{figure}[htbp]
  \centering
  \psset{radius=0.8mm, arcangle=30}
  \setlength{\tabcolsep}{3.5mm}
  \begin{tabular}{ccccccc}
\psset{unit=6mm}
    \begin{pspicture}(0,-.5)(2,2.5)
      \multido{\nx=0+1.5}{2}{\multido{\ny=0+1.5}{2}{\Cnode(\nx,\ny){\nx\ny}}}

      \Cnode(0.75, 0.75){c}

       \multido{\nx=0+1.5}{2}{\multido{\ny=0+1.5}{2}{ \ncline{\nx\ny}{c}}}

       \ncline{00}{1.50}
       \ncline{00}{01.5}
       \ncline{1.51.5}{01.5} \ncline{1.51.5}{1.50}

       \end{pspicture}
       &
       \psset{unit=6mm}
       \begin{pspicture}(0,-.5)(2,2.5)
       \Cnode(0, 0){00}
       \Cnode(0, 1.50){01.50}
       \Cnode(0.75, 0.75){c}

      \multido{\ny=0+0.75}{3}{\Cnode(1.5,\ny){1.50\ny}}

       \multido{\nx=0+0.75}{3}{\multido{\ny=0+0.75}{3}{\ncline{\nx\ny}{c}}}

       \ncline{00}{01.50} \ncline{1.500}{1.500.75} \ncline{1.500.75}{1.501.50}

       \end{pspicture}
       &
       \psset{unit=6mm}
    \begin{pspicture}(0,-.5)(2,2.5)
      \multido{\nx=0+0.75}{3}{\multido{\ny=0+1.5}{2}{\Cnode(\nx,\ny){\nx\ny}}}

       \ncline{00}{01.5} \ncline{0.750}{0.751.5} \ncline{1.500}{1.501.5}
       \multido{\nx=0+1.50}{2}{\multido{\ny=0+1.5}{2}{\ncline{0.750}{\nx\ny} \ncline{0.751.5}{\nx\ny}}}
             
       \end{pspicture}
       &
       \psset{unit=6mm}
    \begin{pspicture}(0,-.5)(2,2.5)
      \multido{\nx=0+1.5}{2}{\multido{\ny=0+1.5}{2}{\Cnode(\nx,\ny){\nx\ny}}}

      \Cnode(0.75, 0.50){c1}
      \Cnode(0.75, 1){c2}

       \multido{\nx=0+1.5}{2}{\ncline{0\nx}{c1} \ncline{0\nx}{c2}}
       \ncline{1.50}{c1} \ncline{1.51.5}{c2}

       \ncline{c1}{c2}
       \ncline{00}{1.50}
       \ncline{00}{01.5}
       \ncline{1.51.5}{01.5} \ncline{1.51.5}{1.50}
         
       \end{pspicture}

        \\
       $J_1$ & $J_2$ & $J_3$ & $J_4$

       \end{tabular}
  \caption{The graphs $J_1, J_2, J_3$ and $J_4$}
  \label{jgraphs}
\end{figure}

\begin{proof}
($\Leftarrow$) Recall that a monopolar graph is a graph that can be partitioned into an independent set and a union of cliques. Since every $J_i$ is not a union of cliques, it must contain a join of stable sets in any partition. It is routine to verify that there exists no partition of these graphs such that their join of stable sets in the partition is a stable set. 

($\Rightarrow$) Since $G$ is connected it is the join of two cographs $G[A]$ and $G[B]$. Since a threshold graph is $(C_4, P_4, 2K_2)$-free,  it suffices to consider the following cases.

\medskip
\noindent
\textbf{Case 1}: $G[A]$ is not a threshold graph.
\medskip

\noindent
\textbf{Subcase 1.1}: $G[A]$ contains $C_4$.
\medskip

\noindent
Since $G[B]$ is non-empty, $G$ contains $J_1$.   

\medskip
\noindent
\textbf{Subcase 1.2}: $G[A]$ contains $2K_2$.
\medskip  

\noindent
If $G[B]$ contains $K_2$ then $G$ contains $J_3$. So suppose $G[B]$ is a stable set. 
If $G[A]$ contains $P_3$ then,  by Lemma \ref{second},  $G[A]$ contains $Q_1$ or $Q_2$. If $G[A]$ contains $Q_2$ then $G$ contains $J_3 = Q_2 \oplus K_1$, and if $G[A]$ contains $Q_1$ then $G$ contains $J_2 = Q_1 \oplus K_1$. 
Finally if $G[A]$ is $P_3$-free then $G = G[A] \oplus G[B]$ is  monopolar. This completes Case 1.

It may be assumed by symmetry that both $G[A]$ and $G[B]$ do not contain $C_4$, $2K_2$ and $P_4$ and hence form threshold graphs.

\medskip
\noindent
\textbf{Case 2}: $G[A]$ and $G[B]$ are threshold graphs.
\medskip

\noindent
\textbf{Subcase 2.1}: $G[A]$ contains a triangle.
\medskip

\noindent
\textit{(1)} If $G[A]$ is a clique then $G[B]$ being a threshold graph, $G$ is also a threshold graph and  therefore monopolar. 

\noindent
\textit{(2)} Suppose $G[A]$ contains a paw or a diamond. In both cases $G[A]$ contains $P_3$. If $G[B]$ contains $2K_1$ then $G$ contains $J_1 = P_3 \oplus 2K_1$, and if $G[B]$ is a clique then $G$ is a threshold graph. 

\noindent
\textit{(3)} Suppose $G[A]$ contains at least one isolated vertex besides the triangle. If $G[B]$ contains $P_3$ then $G$ contains
$J_1 = P_3 \oplus 2K_1$. So we may assume that $G[B]$ is a union of cliques. If $G[B]$ contains $K_2 \cup K_1$ then $G$ contains $J_4 = (K_2 \cup K_1) \oplus (K_2 \cup K_1)$. If $G[B]$ is a non-trivial stable set then $G$ is monopolar. Finally if $G[B]$ is a clique then $G$ forms a threshold graph.

\medskip
\noindent
\textbf{Subcase 2.2}: Both $G[A]$ and $G[B]$ are triangle-free.
\medskip

\noindent
\textit{(1)} Suppose $G[A]$ contains $P_3$. If $G[B]$ contains $2K_1$ then $G$ contains
$J_1 = P_3 \oplus 2K_1$. If $G[B]$ is a clique then $G$ is a threshold graph.

\noindent
\textit{(2)} We may thus assume, by symmetry, that $G[A]$ and $G[B]$ are $P_3$-free. 

First suppose $G[A]$ contains $K_2 \cup K_1$. If $G[B]$ contains $K_2 \cup K_1$ then $G$ contains $J_4 = (K_2 \cup K_1) \oplus (K_2 \cup K_1)$. So let $G[B]$ be $(K_2 \cup K_1)$-free. If $G[B]$  is a stable set then $G$ is monopolar. Otherwise, $G[B]$ is a clique in which case $G$ is a threshold graph. Second suppose $G[A]$ is a clique. Since $G[B]$ is a threshold graph, it follows that $G$ is a threshold graph. 
Finally if $G[A]$ is a stable set, $G[B]$ being $P_3$-free it follows that $G$ is monopolar. This completes the proof.
 \end{proof}

\begin{remark}\label{monopolarcoro}
The graphs $J_1, J_2, J_3$ and $J_4$ are $(1, 2)$-partitionable connected cographs. 
\end{remark}

\begin{proof} If $C(J_i)$ denotes a maximum clique of $J_i$, $i = 1, \dots, 4$, then $J_i[V \setminus C(J_i)]$ is bipartite.  
 \end{proof}

\subsubsection{Monopolar nearly split cographs}\label{3}

In this section we prove Theorem~\ref{almostsplitthm}. First we need an auxiliary result. 

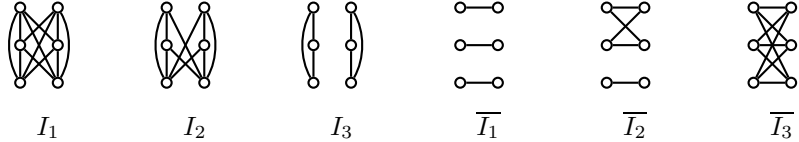
\begin{figure}[htbp]
  \centering
  \psset{radius=0.8mm, arcangle=30}
  \setlength{\tabcolsep}{3.5mm}
  \begin{tabular}{cccccc}
  \psset{unit=5mm}
    \begin{pspicture}(-0.5,-.5)(2,2.5)
      \multido{\nx=0+1}{2}{\multido{\ny=0+1}{3}{\Cnode(\nx,\ny){\nx\ny}}}
      
      \ncline{00}{01}
      \ncarc{00}{02}
      \ncline{00}{11}
      \ncline{00}{12}
      
      \ncline{01}{02}
      \ncline{01}{10}
      \ncline{01}{12}
      
      \ncline{02}{10}
      \ncline{02}{11}
      
      \ncline{10}{11}
      \ncarc[arcangle=-30]{10}{12}
      \ncline{11}{12}
      
      \end{pspicture}
      &
       \psset{unit=5mm}
    \begin{pspicture}(-0.5,-.5)(2,2.5)
      \multido{\nx=0+1}{2}{\multido{\ny=0+1}{3}{\Cnode(\nx,\ny){\nx\ny}}}
      
      \ncline{00}{01}
      \ncarc{00}{02}
      \ncline{00}{11}
      \ncline{00}{12}
      
      \ncline{01}{02}
      \ncline{01}{10}

      \ncline{02}{10}

      \ncline{10}{11}
      \ncarc[arcangle=-30]{10}{12}
      \ncline{11}{12}
      
      \end{pspicture}
      &
       \psset{unit=5mm}
    \begin{pspicture}(-0.5,-.5)(2,2.5)
      \multido{\nx=0+1}{2}{\multido{\ny=0+1}{3}{\Cnode(\nx,\ny){\nx\ny}}}
      
      \ncline{00}{01}
      \ncarc{00}{02}
       \ncline{01}{02}
       \ncline{10}{11}
      \ncarc[arcangle=-30]{10}{12}
      \ncline{11}{12}

      \end{pspicture}

  %
   &
    \psset{unit=5mm}
    \begin{pspicture}(-0.5,-.5)(2,2.5)
      \multido{\nx=0+1}{2}{\multido{\ny=0+1}{3}{\Cnode(\nx,\ny){\nx\ny}}}
      \ncline{00}{10}
      \ncline{01}{11}
      \ncline{02}{12}
      
    \end{pspicture}
    &
    \psset{unit=5mm}
    
    \begin{pspicture}(-0.50,-.5)(2,2.5)
      \multido{\nx=0+1}{2}{\multido{\ny=0+1}{3}{\Cnode(\nx,\ny){\nx\ny}}}
      \ncline{00}{10}
      \ncline{01}{11}
      \ncline{02}{12}
      \ncline{01}{12}
      \ncline{02}{11}
    \end{pspicture}
    &
    \psset{unit=5mm}
    \begin{pspicture}(-0.5,-.5)(2,2.5)
      \multido{\nx=0+1}{2}{\multido{\ny=0+1}{3}{\Cnode(\nx,\ny){\nx\ny}}}
      \ncline{00}{10}
      \ncline{01}{11}
      \ncline{02}{12}
      \ncline{01}{12}
      \ncline{02}{11}
      \ncline{00}{11}
      \ncline{00}{12}
      \ncline{10}{01}
      \ncline{10}{02}
    \end{pspicture}
    \\
    $I_1$ & $I_2$ & $I_3$ & $\ol{I_1}$ & $\ol{I_2}$ & $\ol{I_3}$
     \end{tabular}
  \caption{The graphs $I_1, I_2, I_3$ and their complements}
  \label{fgraphs}
\end{figure}

\begin{proposition}[\cite{demange}]\label{proposition1}
A cograph is $(2, 1)$-partitionable if and only if it does not contain the graphs $\ol{I_1}$, $\ol{I_2}$, $\ol{I_3}$ depicted in Figure \ref{fgraphs}.
\end{proposition}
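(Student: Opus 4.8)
The plan is to prove both directions directly for $(2,1)$-partitions $V=A_1\sqcup A_2\sqcup B$ (with $A_1,A_2$ cliques and $B$ independent), inducting on the cotree of the cograph. The first step is to read off from Figure~\ref{fgraphs} that the three obstructions are $\overline{I_1}=3K_2$, $\overline{I_2}=K_2\cup C_4$ and $\overline{I_3}=K_{3,3}=3K_1\oplus 3K_1$; keeping them in this uncomplemented form is convenient because each will govern exactly one case of the induction. Necessity is then a short finite check: in each of the three graphs the largest clique is a single edge, and a brief case analysis shows that no independent set can remain once two cliques are deleted, so none of them is $(2,1)$-partitionable.

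For sufficiency I would first isolate two decomposition facts. (a) In a disjoint union every clique lies inside a single component, so at most two components may contain an edge, and any edge-component must be absorbed by one clique together with part of $B$, i.e.\ it must be split ($(1,1)$-partitionable). (b) In a join $G=G_1\oplus G_2$ an independent set cannot meet both sides, while a clique of the join is exactly a clique chosen independently in each side; hence $G$ is $(2,1)$-partitionable precisely when one of $G_1,G_2$ is $(2,1)$-partitionable and the other is co-bipartite (that is, $(2,0)$-partitionable, equivalently $3K_1$-free for a cograph).

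With (a) and (b) in hand the induction is short and each forbidden graph does one job. If $G$ is disconnected, $3K_2$-freeness caps the number of edge-containing components at two; when two occur, $3K_2$- and $(K_2\cup C_4)$-freeness force each of them to be $2K_2$- and $C_4$-free, hence split (recall that split $=(1,1)$-partitionable is the $(2K_2,C_4,C_5)$-free class and $C_5$ is not a cograph), and the remaining edgeless components go into $B$; a single edge-component is handled by the induction hypothesis, and an edgeless $G$ is trivial. If $G$ is connected, write $G=G_1\oplus G_2$: two non-co-bipartite sides would each contain $3K_1$ and so induce $3K_1\oplus 3K_1=K_{3,3}$, a contradiction, so at least one side, say $G_2$, is co-bipartite, while $G_1$ is $(2,1)$-partitionable by induction, and fact (b) finishes the case.

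The hard part will be establishing the join equivalence (b) cleanly, namely the two observations that an independent set is confined to one co-component and that the two cliques of the partition can be assembled co-component by co-component; once (b) and its union analogue (a) are available, the contribution of $K_{3,3}$ (respectively of $3K_2$ and $K_2\cup C_4$) is immediate. I would also need to dispose of the degenerate situations of empty $B$, a single co-component, and fully edgeless $G$, but these are routine.
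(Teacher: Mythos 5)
Your proof is correct, but note that the paper does not actually prove Proposition \ref{proposition1} at all: it is imported verbatim from the cited reference \cite{demange} (and the paper only uses its complementary form, Corollary \ref{corigraphs}). So there is no in-paper argument to compare against; what you have written is a self-contained replacement for the citation. Your route --- induction over the cotree, with the three obstructions read off as $\ol{I_1}=3K_2$, $\ol{I_2}=K_2\cup C_4$, $\ol{I_3}=K_{3,3}$ and each one governing exactly one branch of the decomposition --- is the standard way such $(k,\ell)$-characterizations of cographs are obtained in the cited literature, and it is also the same style of modular-decomposition case analysis the present paper uses for Theorems \ref{btheorem}, \ref{monopolarthm1} and \ref{almostsplitthm}. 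The individual steps all check out: the join equivalence (b) is exactly as you state, since an independent set of a join lies in one side and the two cliques can be merged across the join, and for cographs co-bipartite is indeed equivalent to $3K_1$-free because a shortest odd cycle is induced and cographs have no induced cycles of length at least five. The one place to be explicit is the union case with a single edge-containing component, where both cliques of the partition may live in that component, so it need not be split; you correctly defer that subcase to the induction hypothesis rather than to splitness, which closes the only potential gap.
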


\begin{corollary}\label{corigraphs}
A cograph is $(1, 2)$-partitionable if and only if it does not contain the graphs $I_1 = \ol{3K_2}, I_2 = 2K_2 \oplus 2K_1 , I_3 = 2K_3$ depicted in Figure \ref{fgraphs}.
\end{corollary}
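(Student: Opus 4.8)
The plan is to deduce this corollary from Proposition~\ref{proposition1} by a single application of complementation. The crucial observation is the duality between the two partition types: a subset $C \subseteq V$ induces a clique in $G$ if and only if it induces an independent set in $\overline{G}$, and vice versa. Consequently a partition of $V$ witnessing that $G$ is $(1,2)$-partitionable---one clique together with two independent sets---is, read in $\overline{G}$, exactly a partition witnessing that $\overline{G}$ is $(2,1)$-partitionable---one independent set together with two cliques. Thus $G$ is $(1,2)$-partitionable if and only if $\overline{G}$ is $(2,1)$-partitionable.

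Next I would note that the class of cographs is closed under complementation, since it is precisely the class of $P_4$-free graphs and $P_4$ is self-complementary. Hence $\overline{G}$ is again a cograph, and Proposition~\ref{proposition1} applies to it: $\overline{G}$ is $(2,1)$-partitionable if and only if $\overline{G}$ contains none of $\overline{I_1}, \overline{I_2}, \overline{I_3}$ as an induced subgraph.

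The final step uses the standard fact that induced-subgraph containment commutes with complementation: for any graph $H$, the graph $\overline{G}$ contains $\overline{H}$ as an induced subgraph if and only if $G$ contains $H$ as an induced subgraph (restrict to the same vertex set and complement the induced adjacencies). Applying this with $H = I_j$ for $j = 1,2,3$ turns the forbidden list $\{\overline{I_1}, \overline{I_2}, \overline{I_3}\}$ for $\overline{G}$ into the forbidden list $\{I_1, I_2, I_3\}$ for $G$. Chaining the three equivalences yields that $G$ is $(1,2)$-partitionable if and only if $G$ contains none of $I_1, I_2, I_3$, as claimed.

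This argument is essentially mechanical, so there is no substantive obstacle; the only thing requiring care is a bookkeeping check that the figure labels are internally consistent, namely that the graphs drawn as $I_1, I_2, I_3$ in Figure~\ref{fgraphs} really are the complements of those drawn as $\overline{I_1}, \overline{I_2}, \overline{I_3}$, which is confirmed by the identifications $I_1 = \overline{3K_2}$, $I_2 = 2K_2 \oplus 2K_1$ and $I_3 = 2K_3$ recorded in the statement.
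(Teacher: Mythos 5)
Your proposal is correct and matches the paper's intent: the corollary is stated there as an immediate consequence of Proposition~\ref{proposition1}, with exactly this complementation argument (cliques and independent sets swap under complementation, cographs are closed under complementation, and $\overline{G}$ contains $\overline{H}$ iff $G$ contains $H$) left implicit. Nothing is missing.
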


We are now ready to prove the theorem.

\begin{theorem}\label{almostsplitthm}

Let $G$ be a connected cograph. Then $G$ is a monopolar nearly split graph if and only if $G$ does not contain the graphs $R_1, \dots, R_8$ depicted in Figure \ref{rgraphs}.
\end{theorem}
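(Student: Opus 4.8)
The crux of the argument is a purely logical reduction to the two characterisations already in hand. By definition $G$ is a monopolar nearly split graph precisely when $G$ is monopolar or $G$ is $(1,2)$-partitionable, so by Theorem~\ref{monopolarthm1} and Corollary~\ref{corigraphs} the connected cograph $G$ is monopolar nearly split if and only if $G$ is $\{J_1,\dots,J_4\}$-free or $\{I_1,I_2,I_3\}$-free. Contrapositively, $G$ fails to be monopolar nearly split exactly when it contains some $J_i$ \emph{and} some $I_j$. Since the classes of monopolar cographs and of $(1,2)$-partitionable cographs are hereditary, so is their union; hence the class of monopolar nearly split cographs admits a finite list of minimal forbidden induced subgraphs, and these are precisely the minimal cographs containing both some $J_i$ and some $I_j$. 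The theorem claims this list is $R_1,\dots,R_8$.

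For the direction asserting that $R_1,\dots,R_8$ are genuinely forbidden, it suffices to inspect Figure~\ref{rgraphs} and exhibit in each $R_k$ an induced copy of some $J_i$ together with an induced copy of some $I_j$; then $R_k$ is neither monopolar (by Theorem~\ref{monopolarthm1}) nor $(1,2)$-partitionable (by Corollary~\ref{corigraphs}), so it is not monopolar nearly split, and by heredity neither is any cograph containing it. This is a routine check.

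The substantial direction is completeness. The plan is to take a connected cograph $G$ that is not monopolar nearly split and is vertex-minimal with this property, and to show that $G$ is isomorphic to one of $R_1,\dots,R_8$; such a minimal $G$ is exactly a minimal graph containing both some $J_i$ and some $I_j$. The key organising observation is that, among the $I$-graphs, $I_2 = 2K_2\oplus 2K_1$ and $I_3 = 2K_3$ are themselves monopolar (the first with $2K_1$ as the independent part and $2K_2$ as the clique part, the second as a union of cliques) and hence contain no $J_i$, whereas $I_1 = \overline{3K_2}$ is the octahedron and already contains $J_1$. I would case on which $I_j$ occurs, with priority $I_1,I_2,I_3$. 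If $G$ contains $I_1$, then, since $I_1$ already contains $J_1$, minimality forces $G \cong I_1$. Otherwise the chosen $I_j$ with $j\in\{2,3\}$ is monopolar, so producing a $J_i$ requires vertices interacting with that copy; using the join decomposition $G = G[A]\oplus G[B]$ of the connected cograph together with Lemmas~\ref{first}--\ref{fourth} --- which classify the cographs arising from a $P_3$ alongside a $K_3$, a $2K_2$, or a $2K_3$ --- one pins down in each configuration the minimal way to complete an induced $J_i$ around the fixed $I_j$, and matches the result to a graph in Figure~\ref{rgraphs}.

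The main obstacle is precisely this completeness case analysis: one must enumerate, without omission, every minimal cograph simultaneously containing a $J$-obstruction and an $I$-obstruction, verify minimality (each single-vertex deletion restoring membership in the union of the two classes), and recognise when different pairs $(J_i,I_j)$ give rise to the same minimal graph, so that the final list collapses to exactly eight. The structural lemmas and the cograph join decomposition keep this enumeration finite and tightly constrained, but the bookkeeping needed to avoid missing or double-counting configurations is the delicate part.
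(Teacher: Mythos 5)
Your reduction of the completeness direction to ``enumerate the minimal cographs containing both some $J_i$ and some $I_j$'' is not faithful to the statement, and following it literally would carry you off the list $R_1,\dots,R_8$. The theorem is restricted to \emph{connected} cographs, and that restriction is essential: the minimal forbidden induced subgraphs of the hereditary class of monopolar nearly split cographs are \emph{not} exhausted by $R_1,\dots,R_8$. Concretely, take $H=J_1\cup K_3$, a $5$-wheel together with a disjoint triangle. It is a cograph; it is not monopolar, since its first component contains $J_1$ (Theorem~\ref{monopolarthm1}); and it is not $(1,2)$-partitionable, since it contains $I_3=2K_3$ (Corollary~\ref{corigraphs}) --- the single clique of any such partition would lie in one component and leave a triangle in the other. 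Deleting any vertex of $H$ yields $J_1\cup K_2$, $C_4\cup K_3$ or $\mathrm{diamond}\cup K_3$, each of which is $(1,2)$-partitionable or monopolar, so $H$ is a \emph{minimal} cograph containing both a $J_i$ and an $I_j$. Yet $H$ contains none of $R_1,\dots,R_8$, because every $R_k$ is connected on at least six vertices while the components of $H$ have five and three. The theorem survives only because every \emph{connected} cograph containing $H$ also contains some $R_k$ (both components of $H$ must lie in one factor of the join decomposition, so a common neighbour exists and $R_7=K_1\oplus(K_3\cup\mathrm{diamond})$ appears); an enumeration organised as you propose would either output $H$ as a ninth obstruction or quietly misuse connectivity. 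Relatedly, if you take $G$ vertex-minimal only among \emph{connected} counterexamples, you can no longer argue that an arbitrary one-vertex deletion restores membership in the class, since the deletion may disconnect $G$.

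The second, independent problem is that the substantial direction is left as a plan: ``one pins down in each configuration the minimal way to complete an induced $J_i$ around the fixed $I_j$'' is precisely the work the theorem requires, and nothing in your write-up organises or bounds that search so that it provably terminates with exactly these eight graphs. The paper never fixes a copy of an $I_j$ at all. It writes the minimal counterexample as a join $G=G[A]\oplus G[B]$, uses minimality to conclude that each \emph{factor} is monopolar or $(1,2)$-partitionable, and then cases on whether the factors contain $K_2\cup K_1$, $C_4$ or $2K_2$, or are cliques, stable sets or threshold graphs, applying Theorem~\ref{monopolarthm1}, Corollary~\ref{corigraphs} and Lemmas~\ref{second} and~\ref{fourth} \emph{inside the factors} to exhibit some $R_k$ in every branch. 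That organisation is what keeps the analysis finite and keeps $G$ connected throughout; you would need to supply a case analysis of comparable completeness before this can be regarded as a proof.
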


(1) $R_1 = 2K_1 \oplus 2K_1 \oplus 2K_1$.

(2) $R_2 = 2K_2 \oplus (K_2 \cup K_1)$.

(3) $R_3 = 2K_1 \oplus (P_3 \cup K_2)$.

(4) $R_4 = K_1 \oplus (2K_1 \oplus 2K_2)$. 

(5) $R_5 = K_2 \oplus 2K_3$.

(5') $R_5 = K_1 \oplus (K_1 \oplus 2K_3)$.

(6) $R_6 = K_1 \oplus (P_3 \cup 2K_3)$.

(7) $R_7 = K_1 \oplus (K_3 \cup (P_3 \oplus K_1))$.

(8) $R_8 = K_1 \oplus (K_3 \cup (K_1 \oplus (K_1 \cup K_2)))$.

\hbox{}

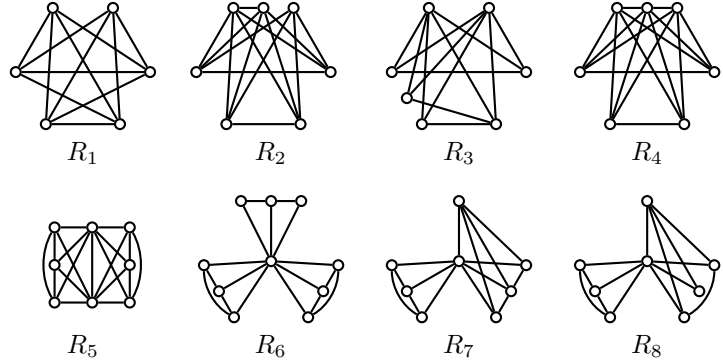
\begin{figure}[htbp]
  \centering
  \psset{radius=0.8mm, arcangle=30}
  \setlength{\tabcolsep}{4.5mm}
  \begin{tabular}{cccccc}
    \psset{unit=4mm}
    \begin{pspicture}(-2,-2)(+2,+2)
      \SpecialCoor
      \rput(2; 90){\multido{\ix=-1+2,\nx=0+2}{2}{\Cnode(\ix;  0){0\nx}}}
      \rput(2;210){\Cnode(-1;120){10} \Cnode(1;120){12}}
      \rput(2;330){\Cnode(-1;240){20} \Cnode(1;240){22}}
      \multido{\ny=0+2}{3}{
        \ncline{0\ny}{10} \ncline{0\ny}{12}
        \ncline{0\ny}{20} \ncline{0\ny}{22}
      }
      \ncline{10}{20} \ncline{10}{22} \ncline{12}{20} \ncline{12}{22}
    \end{pspicture}
      &
      \psset{unit=4mm}
    \begin{pspicture}(-1.5,-2)(+2,+2)
      \SpecialCoor
      \rput(2; 90){\multido{\ix=-1+1,\nx=0+1}{3}{\Cnode(\ix;  0){0\nx}}}
       \rput(2;210){\Cnode(-1; 120){10} \Cnode(1; 120){12}}
      \rput(2;330){\Cnode(-1;240){20} \Cnode(1;240){22}}
     
      \multido{\nx=0+1}{3}{
        \ncline{0\nx}{10} \ncline{0\nx}{12}
        \ncline{20}{0\nx} \ncline{22}{0\nx}
      }
      \ncline{10}{22} \ncline{12}{20} 
      
      \ncline{00}{01}
    \end{pspicture}
      &
      \psset{unit=4mm}
    \begin{pspicture}(-2,-2)(+2,+2)
      \SpecialCoor
      \rput(2; 90){\multido{\ix=-1+2,\nx=0+2}{2}{\Cnode(\ix;  0){0\nx}}}
      \rput(2;210){\multido{\ix=-1+1,\nx=0+1}{3}{\Cnode(\ix;120){1\nx}}}
      \rput(2;330){\Cnode(-1;240){20} \Cnode(1;240){22}}
      \multido{\nx=0+2}{2}{
        \multido{\ny=0+1}{3}{\ncline{0\nx}{1\ny}}
        \ncline{20}{0\nx} \ncline{22}{0\nx}
      }
      \ncline{10}{22} \ncline{12}{20}  \ncline{11}{22}
    \end{pspicture}
    &
     \psset{unit=4mm}
    \begin{pspicture}(-2,-2)(+2,+2)
      \SpecialCoor
      \rput(2; 90){\multido{\ix=-1+1,\nx=0+1}{3}{\Cnode(\ix;  0){0\nx}}}
      \rput(2;210){\multido{\ix=-1+2,\nx=0+2}{2}{\Cnode(\ix;120){1\nx}}}
      \rput(2;330){\Cnode(-1;240){20} \Cnode(1;240){22}}
      \multido{\nx=0+1}{1}{\ncline{\nx0}{\nx1}\ncline{\nx1}{\nx2}}
      \multido{\nx=0+1}{3}{
        \multido{\ny=0+2}{2}{\ncline{0\nx}{1\ny}}
        \ncline{20}{0\nx} \ncline{22}{0\nx}
      }
      \ncline{10}{22} \ncline{12}{20} 
    \end{pspicture}

  \\
  $R_1$ & $R_2$ & $R_3$ & $R_4$
   \\
   \\

    
    \psset{unit=5mm}
    \begin{pspicture}(-0.5,-.5)(2,2.5)
      \multido{\nx=0+2}{2}{\multido{\ny=0+1}{3}{\Cnode(\nx,\ny){\nx\ny}}}
      
      \Cnode(1,0){10}
      \Cnode(1,2){12}
      
      \ncline{10}{12}
      
      \ncline{00}{01}
      \ncarc{00}{02}
       \ncline{01}{02}
       \ncline{20}{21}
      \ncarc[arcangle=-30]{20}{22}
      \ncline{21}{22}
      
      \multido{\nx=0+2}{2}{\multido{\ny=0+1}{3}{\ncline{\nx\ny}{10} \ncline{\nx\ny}{12}}}

      \end{pspicture}

    &
     \psset{unit=4mm}
    \begin{pspicture}(-2,-2)(+2,+2)
      \SpecialCoor
      \rput(2; 90){\multido{\ix=-1+1,\nx=0+1}{3}{\Cnode(\ix;  0){0\nx}}}
      \rput(2;210){\multido{\ix=-1+1,\nx=0+1}{3}{\Cnode(\ix;120){1\nx}}}
      \rput(2;330){\multido{\ix=-1+1,\nx=0+1}{3}{\Cnode(\ix;240){2\nx}}}
      
      \Cnode(0,0){x}
      
      \multido{\nx=0+1}{3}{\multido{\ny=0+1}{3}{\ncline{\nx\ny}{x}}}
      
      \ncline{00}{01} \ncline{01}{02}
      \ncline{10}{11} \ncline{11}{12} \ncarc{10}{12}
      \ncline{20}{21} \ncline{21}{22} \ncarc{20}{22}
      
    \end{pspicture}
    &
    \psset{unit=4mm}
    \begin{pspicture}(-2,-2)(+2,+2)
      \SpecialCoor
      \rput(2; 90){\Cnode(0;  0){01}}
      \rput(2;210){\multido{\ix=-1+1,\nx=0+1}{3}{\Cnode(\ix;120){1\nx}}}
      \rput(2;330){\multido{\ix=-1+1,\nx=0+1}{3}{\Cnode(\ix;240){2\nx}}}
      
      \Cnode(0,0){x}
      
      \multido{\nx=0+1}{3}{\ncline{2\nx}{01} \ncline{2\nx}{x} \ncline{1\nx}{x} }
      \ncline{01}{x}

      
      \ncline{10}{11} \ncline{11}{12} \ncarc{10}{12}
      \ncline{20}{21} \ncline{21}{22} 
      
    \end{pspicture}
    &
    \psset{unit=4mm}
    \begin{pspicture}(-2,-2)(+2,+2)
      \SpecialCoor
      \rput(2; 90){\Cnode(0;  0){01}}
      \rput(2;210){\multido{\ix=-1+1,\nx=0+1}{3}{\Cnode(\ix;120){1\nx}}}
      \rput(2;330){\multido{\ix=-1+1,\nx=0+1}{3}{\Cnode(\ix;240){2\nx}}}
      
      \Cnode(0,0){x}
      
      \multido{\nx=0+1}{3}{\ncline{2\nx}{01} \ncline{2\nx}{x} \ncline{1\nx}{x} }
      \ncline{01}{x}

      
      \ncline{10}{11} \ncline{11}{12} \ncarc{10}{12}
      \ncarc{20}{22}  
      
    \end{pspicture}
    
    \\
    $R_5$ & $R_6$ & $R_7$ & $R_8$
   
     \end{tabular}
  \caption{The graphs $R_1, \dots, R_8$}
  \label{rgraphs}
\end{figure}

\begin{proof}
($\Leftarrow$) This is proved by a careful case analysis. 

($\Rightarrow$) Suppose $G$ is neither monopolar nor $(1,2)$-partitionable and vertex minimal. Since $G$ is connected it is the join of two cographs $G[A]$ and $G[B]$. By the minimality of $G$, $G[A]$ and $G[B]$ are either monopolar or $(1,2)$-partitionable. We distinguish a number of cases.

\medskip
\noindent
\textbf{Case 1}: $G[A]$ and $G[B]$ are ($K_2 \cup K_1$)-free. 
\medskip

\noindent
It follows that $G$ is a join of stable sets. Hence $G$ either contains $R_1 = \ol{3K_2}$ or is $(1, 2)$-partitionable.

\medskip
\noindent
\textbf{Case 2}: $G[A]$ and $G[B]$ contain $K_2 \cup K_1$. 
\medskip

\noindent
\textit{(1)} If $G[A]$ contains $C_4$ then $G$ contains $R_1 = C_4 \oplus 2K_1$. 

\noindent
\textit{(2)} If $G[A]$ contains $2K_2$ then $G$ contains $R_2 = 2K_2 \oplus (K_2 \cup K_1)$. 

\noindent
\textit{(3)} By symmetry if $G[A]$ and $G[B]$ are threshold graphs then $G$ is $(1, 2)$-partitionable.

\medskip
\noindent
\textbf{Case 3: } $G[A]$ is $(K_2 \cup K_1)$-free, and $G[B]$ contains $K_2 \cup K_1$.
\medskip

\noindent
\textbf{Subcase 3.1}: $G[A]$ is a clique.
\medskip

\noindent
 If $G[B]$ is $(1, 2)$-partitionable then $G$ is $(1,2)$-partitionable. Otherwise, $G[B]$ must be monopolar. By Corollary \ref{corigraphs} and given that $J_1 \subset I_1$ it follows that $G[B]$ contains $I_2$ or $I_3$.
 
 \noindent
\textit{(1)} If $G[B]$ contains $I_2$ then $G$ contains $R_4 = K_1 \oplus I_2$. 
 
 \noindent
\textit{(2)} Suppose $G[B]$ contains $I_3$. If $G[A]$ has at least 2 vertices then $G$ contains $R_5 = K_2 \oplus I_3$. So suppose $G[A]$ is a single vertex. If $G[B]$ is $P_3$-free then $G$ is monopolar. If $G[B]$ contains $P_3$ then, by Lemma \ref{fourth}, $G[B]$ contains $W_1, W_2, W_3$ or $W_4$. It follows that $G$ contains $R_6 = K_1 \oplus W_1$, $R_7 = K_1 \oplus W_2$, $R_8 = K_1 \oplus W_3$, or $R_5 = K_1 \oplus W_4$, respectively.

\medskip
\noindent
\textbf{Subcase 3.2}: $G[A]$ is an independent set.
\medskip

\noindent
The case where $G[A]$ is a single vertex is covered in Subcase 3.1. We may thus assume that $G[A]$ contains $2K_1$. If $G[B]$ is $P_3$-free  then $G$ is monopolar. If $G[B]$ is a threshold graph  then $G$ is $(1, 2)$-partitionable. Otherwise, $G[B]$ contains $C_4$, or $P_3$ and $2K_2$. If $G[B]$ contains $C_4$ then $G$ contains $R_1 = 2K_1 \oplus C_4$. If $G[B]$ contains $P_3$ and $2K_2$ then, by Lemma \ref{second}, $G[B]$ contains $Q_1$ or $Q_2$. Hence $G$ contains $R_3 = 2K_1 \oplus Q_1$ or $R_4 = 2K_1 \oplus Q_2$, respectively.

\medskip
\noindent
\textbf{Subcase 3.3}: $G[A]$ contains $2K_1 \oplus 2K_1$. 
\medskip

\noindent
Since $G[B]$ contains $K_2 \cup K_1$, it follows that $G$ contains $R_1 = 2K_1 \oplus 2K_1 \oplus 2K_1$. 

\medskip
\noindent
\textbf{Subcase 3.4}: $G[A] = qK_1 \oplus K_r$ for some integers $q \geq 2$ and $r \geq 1$.
\medskip 

\noindent
If $G[B]$ is a threshold graph then G is $(1, 2)$-partitionable. Otherwise, $G[B]$ contains $2K_2$ or $C_4$. It follows that $G$ either contains $R_4$ or $R_1$, respectively. This completes the proof.
 \end{proof}

\subsection{Main result}\label{4}

This section establishes Theorem \ref{mainthm}. The following two lemmas are first required. The first lemma is implicit in~\cite{ekim}.

\begin{lemma}\label{connectedobs}
Minimal in-partitionable cographs are connected. 
\end{lemma}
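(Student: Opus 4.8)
The plan is to prove the contrapositive in a way that directly exploits the join/union structure of cographs. Recall that a cograph is either $K_1$, or disconnected, or the complement of a disconnected cograph (equivalently, a join of two smaller cographs). So suppose for contradiction that $G$ is a minimal in-partitionable cograph that is \emph{disconnected}, and write $G = G_1 \cup G_2 \cup \dots \cup G_k$ with $k \geq 2$, where each $G_i$ is a connected component. I would first dispose of the trivial observation that a single vertex and, more generally, any triangle-free graph is partitionable (colour everything \Kf), so each $G_i$ must be strictly smaller than $G$ and hence, by minimality, partitionable.

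The key step is to argue that partitions of the individual components can be combined into a partition of the whole graph, contradicting in-partitionability. Since there are no edges between distinct components, the union of the \Kf parts across all components still induces a disjoint union of cliques (no new cliques or edges can appear between components), and the union of the \Pf parts still induces a $K_3$-free graph (any triangle would have to lie within a single component, but each component's \Pf part is triangle-free). Therefore, given a partition of each $G_i$, the componentwise union is a valid partition of $G$. This shows $G$ is partitionable, the desired contradiction.

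Thus the main content reduces to the single claim that \emph{each component $G_i$ is partitionable}, and this is immediate from minimality provided each $G_i$ is a proper induced subgraph of $G$; since $k \geq 2$ every component is proper. I would state explicitly that ``partitionable'' is closed under taking disjoint unions (the combining argument above), which is the crux, and that in-partitionability is inherited only by a single connected piece.

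The step I expect to require the most care is not the combining argument itself, which is essentially trivial once stated, but rather making precise the use of \emph{minimality}: a minimal in-partitionable cograph is in-partitionable but every proper induced subgraph is partitionable, and I must confirm that each $G_i$ is genuinely a proper induced subgraph (true since $k \geq 2$) and is itself a cograph (components of cographs are cographs). Given these, minimality forces each $G_i$ to be partitionable, the combining lemma forces $G$ to be partitionable, and the contradiction is complete. Hence every minimal in-partitionable cograph must be connected.
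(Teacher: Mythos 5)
Your argument is correct and is essentially the paper's own proof: both take a disconnected minimal counterexample, invoke minimality to obtain a partition of each piece, and observe that these partitions combine across components because no edges (hence no new cliques or triangles) arise between components. The only cosmetic slip is that your colour convention is reversed relative to the paper's (there \Kf marks the triangle-free part and \Pf the $P_3$-free part), which does not affect the mathematics.
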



\begin{proof}
Let $G=(V,E)$ be a cograph. Suppose to the contrary that $G$ is disconnected and vertex minimal in-partitionable. Let $\{A,B\}$ be a partition of $V$ such that $G = G[A] \cup G[B]$. By the minimality of $G$, $G[A]$ and $G[B]$ are partitionable. Let $C$ and $D$ be a partition of $G[A]$, $P$ and $Q$ a partition of $G[B]$ such that $G[C]$, $G[P]$ are bipartite, and $G[D]$, $G[Q]$ are $P_3$-free. It follows that $G[C\cup P]$ is bipartite and $G[D \cup Q]$ is $P_3$-free, which is a partition of $G$.
 \end{proof}

\begin{lemma}\label{basiclemma1}
Let $G=(V,E)$ be a cograph, and let $\{A,B\}$ be a partition of $V$ such that $G = G[A] \oplus G[B]$. If both $G[A]$ and $G[B]$ are threshold graphs then $G$ is partitionable.  
\end{lemma}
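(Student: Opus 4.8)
The plan is to exploit the fact that every threshold graph is a split graph. Recall from the preliminaries that split graphs are exactly the $(1,1)$-partitionable graphs and that the threshold graphs are precisely the cographs that are split. Hence I would first write $V(G[A])$ as a disjoint union $K_A \cup I_A$, where $G[K_A]$ is a clique and $I_A$ is an independent set, and likewise $V(G[B]) = K_B \cup I_B$ with $G[K_B]$ a clique and $I_B$ independent. Such decompositions exist precisely because $G[A]$ and $G[B]$ are split.

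Next I would propose the partition $\{R, S\}$ of $V$ given by $R = K_A \cup K_B$ and $S = I_A \cup I_B$, and verify the two defining conditions of partitionability. Because $G = G[A] \oplus G[B]$, every vertex of $A$ is adjacent to every vertex of $B$; in particular every vertex of $K_A$ is adjacent to every vertex of $K_B$. Since $G[K_A]$ and $G[K_B]$ are themselves cliques, $G[R] = G[K_A] \oplus G[K_B]$ is a single clique, and a single clique is (trivially) a disjoint union of cliques, as required of the first part.

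For the second part, $G[S] = G[I_A] \oplus G[I_B]$ is the join of two independent sets, that is, the complete bipartite graph with sides $I_A$ and $I_B$. This graph is bipartite and hence contains no $K_3$, so $S$ induces a triangle-free subgraph. As $R$ and $S$ partition $V$, this exhibits $G$ as partitionable and completes the argument.

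I do not expect a genuine obstacle here. The only point that requires care is the naming: the symbols $A$ and $B$ already denote the two sides of the join, so the clique/triangle-free partition must be introduced under fresh names (here $R$ and $S$) to avoid the clash with the two senses of "partition" used in the paper. The substantive content reduces to two elementary observations about the operation $\oplus$, namely that the join of two cliques is again a clique and that the join of two independent sets is complete bipartite; both are immediate from the definition of $\oplus$, so no case analysis is needed.
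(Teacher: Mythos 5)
Your proof is correct and is essentially identical to the paper's own argument: both split each threshold side into a clique and a stable set, then observe that the join makes the two cliques one clique and the two stable sets a complete bipartite (hence triangle-free) graph. No issues.
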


\begin{proof}
Let $G' = G[A]$ and $G'' = G[B]$. Let $\{C, D\}$ be a partition of $V(G')$ such that $C$ induces a clique and $D$ induces a stable set. Similarly, let $\{F, P\}$ be a partition of $V(G'')$ such that $F$ induces a clique and $G$ induces a stable set. Since $G = G[A] \oplus G[B]$, it follows that $G[C \cup F] = G[C] \oplus G[F]$ is a clique and $G[D \cup P ] = G[D] \oplus G[P]$ is a complete bipartite graph. 
 \end{proof}

\begin{figure}[htbp]
  \centering
 \psset{radius=0.8mm, arcangle=30}
  \setlength{\tabcolsep}{4mm}
  \begin{tabular}{cccccc}
    \psset{unit=4mm}
    \begin{pspicture}(-2,-2)(+2,+2)
      \SpecialCoor
      \rput(2; 90){\multido{\ix=-1+1,\nx=0+1}{3}{\Cnode(\ix;  0){0\nx}}}
      \rput(2;210){\Cnode(-1;120){10} \Cnode(1;120){12}}
      \rput(2;330){\Cnode(-1;240){20} \Cnode(1;240){22}}
      \ncline{00}{01} \ncline{01}{02}
      \multido{\ny=0+1}{3}{
        \ncline{0\ny}{10} \ncline{0\ny}{12}
        \ncline{0\ny}{20} \ncline{0\ny}{22}
      }
      \ncline{10}{20} \ncline{10}{22} \ncline{12}{20} \ncline{12}{22}
    \end{pspicture}
    &
    \psset{unit=4mm}
    \begin{pspicture}(-2,-2)(+2,+2)
      \SpecialCoor
      \rput(2; 90){\multido{\ix=-1+1,\nx=0+1}{3}{\Cnode(\ix;  0){0\nx}}}
      \rput(2;210){\multido{\ix=-1+1,\nx=0+1}{3}{\Cnode(\ix;120){1\nx}}}
      \rput(2;330){\Cnode(-1;240){20} \Cnode(1;240){22}}
      \multido{\nx=0+1}{2}{\ncline{\nx0}{\nx1}\ncline{\nx1}{\nx2}}
      \multido{\nx=0+1}{3}{
        \multido{\ny=0+1}{3}{\ncline{0\nx}{1\ny}}
        \ncline{20}{0\nx} \ncline{22}{0\nx}
      }
      \ncline{10}{22} \ncline{12}{20} \ncline{11}{20} \ncline{11}{22}
    \end{pspicture}
    &
    
    \psset{unit=4mm}
    \begin{pspicture}(-2,-2)(+2,+2)
      \SpecialCoor
      \rput(2; 90){\multido{\ix=-1+1,\nx=0+1}{3}{\Cnode(\ix;  0){0\nx}}}
      \rput(2;210){\multido{\ix=-1+1,\nx=0+1}{3}{\Cnode(\ix;120){1\nx}}}
      \rput(2;330){\Cnode(-1;240){20} \Cnode(1;240){22}}
      \multido{\nx=0+1}{2}{\ncline{\nx0}{\nx1}}
      \multido{\nx=0+1}{3}{
        \multido{\ny=0+1}{3}{\ncline{0\nx}{1\ny}}
        \ncline{20}{0\nx} \ncline{22}{0\nx}
      }
      \ncline{10}{22} \ncline{10}{20} \ncline{12}{20}
      \ncline{12}{22} \ncline{11}{20} \ncline{11}{22}
    \end{pspicture}
     &
     
    \psset{unit=4mm}
    \begin{pspicture}(-2,-2)(+2,+2)
      \SpecialCoor
      \rput(2; 90){\multido{\ix=-1+1,\nx=0+1}{3}{\Cnode(\ix;  0){0\nx}}}
      \rput(2;210){\multido{\ix=-1+1,\nx=0+1}{3}{\Cnode(\ix;120){1\nx}}}
      \rput(2;330){\Cnode(-1;240){20} \Cnode(1;240){22}}
      \multido{\nx=0+1}{1}{\ncline{\nx0}{\nx1}\ncline{\nx1}{\nx2}}
      \multido{\nx=0+1}{3}{
        \multido{\ny=0+1}{3}{\ncline{0\nx}{1\ny}}
        \ncline{20}{0\nx} \ncline{22}{0\nx}
      }
      \ncline{10}{22} \ncline{12}{20}  \ncline{11}{22}
    \end{pspicture}

    \\
    $H_1$ & $H_2$ & $H_3$ & $H_4$
    \\
    \\

    \psset{unit=4mm}
    \begin{pspicture}(-2,-2)(+2,+2)
      \SpecialCoor
      \rput(2; 90){\multido{\ix=-1+1,\nx=0+1}{3}{\Cnode(\ix;  0){0\nx}}}
      \rput(2;210){\multido{\ix=-1+1,\nx=0+1}{3}{\Cnode(\ix;120){1\nx}}}
      \rput(2;330){\Cnode(-1;240){20} \Cnode(1;240){22}}
      \multido{\nx=0+1}{3}{
        \multido{\ny=0+1}{3}{\ncline{0\nx}{1\ny}}
        \ncline{20}{0\nx} \ncline{22}{0\nx}
      }
      \ncline{10}{22} \ncline{12}{20} \ncline{11}{20} \ncline{11}{22}
      \ncline{10}{11} \ncline{11}{12}
      \ncline{00}{01}
    \end{pspicture}
    
    &
    \psset{unit=4mm}
    \begin{pspicture}(-2,-2)(+2,+2)
      \SpecialCoor
      \rput(2; 90){\multido{\ix=-1+1,\nx=0+1}{3}{\Cnode(\ix;  0){0\nx}}}
      \rput(2;210){\multido{\ix=-1+1,\nx=0+1}{3}{\Cnode(\ix;120){1\nx}}}
      \rput(2;330){\multido{\ix=-1+1,\nx=0+1}{3}{\Cnode(\ix;240){2\nx}}}
      \multido{\nx=0+1}{3}{\multido{\ny=0+1}{3}{\ncline{0\nx}{1\ny} \ncline{0\nx}{2\ny}}}
      
      \ncline{10}{11} \ncline{11}{12} \ncarc{10}{12}
      \ncline{20}{21} \ncline{21}{22}
      
      \ncarc{00}{02}
    \end{pspicture}
    
    &
    \psset{unit=4mm}
    \begin{pspicture}(-2,-2)(+2,+2)
      \SpecialCoor
      \rput(2; 90){\multido{\ix=-1+1,\nx=0+1}{3}{\Cnode(\ix;  0){0\nx}}}
      \rput(2;210){\multido{\ix=-1+1,\nx=0+1}{3}{\Cnode(\ix;120){1\nx}}}
      \rput(2;330){\multido{\ix=-1+1,\nx=0+1}{3}{\Cnode(\ix;240){2\nx}}}
      \multido{\nx=0+1}{3}{\multido{\ny=0+1}{3}{\ncline{0\nx}{1\ny} \ncline{0\nx}{2\ny}}}
      \ncarc{00}{02}
      \ncarc{20}{22}
      \ncline{10}{11} \ncline{11}{12} 
      \multido{\nx=0+1}{3}{\ncline{1\nx}{21}}
    \end{pspicture}
    
    &
    \psset{unit=4mm}
    \begin{pspicture}(-2,-2)(+2,+2)
      \SpecialCoor
      \rput(2; 90){\multido{\ix=-1+1,\nx=0+1}{3}{\Cnode(\ix;  0){0\nx}}}
      \rput(2;210){\multido{\ix=-1+1,\nx=0+1}{3}{\Cnode(\ix;120){1\nx}}}
      \rput(2;330){\multido{\ix=-1+1,\nx=0+1}{3}{\Cnode(\ix;240){2\nx}}}
       \multido{\nx=0+1}{3}{\multido{\ny=0+1}{3}{\ncline{0\nx}{1\ny} \ncline{0\nx}{2\ny}}}
       \ncarc{00}{02}
       \ncline{10}{11} \ncline{11}{12} \ncarc{10}{12}
       \ncarc{20}{22}
       \ncline{21}{11}
    \end{pspicture}

        \\
    $H_5$ & $H_6$ & $H_7$ & $H_8$
    \\
    \\
   \psset{unit=6mm}
    \begin{pspicture}(0,-.5)(2,2.5)
      \multido{\nx=0+1.5}{2}{\multido{\ny=0+1.5}{2}{\Cnode(\nx,\ny){\nx\ny}}}
      \multido{\ny=0+0.75}{3}{\Cnode(3,\ny){3\ny}}
      
      \Cnode(0.75, 0.75){c}
      \Cnode(2.25, 0.75){v}
      
       \multido{\nx=0+1.5}{2}{\multido{\ny=0+1.5}{2}{\ncline{\nx\ny}{v} \ncline{\nx\ny}{c}}}
       \multido{\ny=0+0.75}{3}{\ncline{3\ny}{v}}
       \ncline{v}{c}
       \ncline{00}{1.50}
       \ncline{00}{01.5}
       \ncline{1.51.5}{01.5} \ncline{1.51.5}{1.50}

       \ncline{30}{30.75} \ncarc[arcangle=-30]{30}{31.50} \ncline{30.75}{31.50}
       
       \end{pspicture}
       &
       \psset{unit=6mm}
       \begin{pspicture}(-1,-.5)(2,2.5)
       \Cnode(0, 0){00}
       \Cnode(0, 1.50){01.50}
       \Cnode(0.75, 0.75){c}

      \multido{\ny=0+0.75}{3}{\Cnode(1.5,\ny){1.50\ny}}
      \multido{\ny=0+0.75}{3}{\Cnode(3,\ny){3\ny}}

      \Cnode(2.25, 0.75){v}
      
       \multido{\nx=0+0.75}{3}{\multido{\ny=0+0.75}{3}{\ncline{\nx\ny}{v} \ncline{\nx\ny}{c}}}
       \multido{\ny=0+0.75}{3}{\ncline{3\ny}{v}}

       \ncline{00}{01.50} \ncline{1.500}{1.500.75} \ncline{1.500.75}{1.501.50}
       
       \ncline{30}{30.75} \ncarc[arcangle=-30]{30}{31.50} \ncline{30.75}{31.50}
       
       \end{pspicture}
       &
       \psset{unit=6mm}
    \begin{pspicture}(-1,-.5)(2,2.5)
      \multido{\nx=0+0.75}{3}{\multido{\ny=0+1.5}{2}{\Cnode(\nx,\ny){\nx\ny}}}
      \multido{\ny=0+0.75}{3}{\Cnode(3,\ny){3\ny}}
      
      \Cnode(2.25, 0.75){v}
      
       \multido{\nx=0+0.75}{3}{\multido{\ny=0+1.5}{2}{\ncline{\nx\ny}{v}}}
       \multido{\ny=0+0.75}{3}{\ncline{3\ny}{v}}
       
       
       \ncline{00}{01.5} \ncline{0.750}{0.751.5} \ncline{1.500}{1.501.5}
       \multido{\nx=0+1.50}{2}{\multido{\ny=0+1.5}{2}{\ncline{0.750}{\nx\ny} \ncline{0.751.5}{\nx\ny}}}

       \ncline{30}{30.75} \ncarc[arcangle=-30]{30}{31.50} \ncline{30.75}{31.50}
       
       \end{pspicture}
       &
         \psset{unit=6mm}
    \begin{pspicture}(-1,-.5)(2,2.5)
      \multido{\nx=0+1.5}{2}{\multido{\ny=0+1.5}{2}{\Cnode(\nx,\ny){\nx\ny}}}
      \multido{\ny=0+0.75}{3}{\Cnode(3,\ny){3\ny}}
      
      \Cnode(0.75, 0.50){c1}
      \Cnode(0.75, 1){c2}
      \Cnode(2.25, 0.75){v}
      
       \multido{\nx=0+1.5}{2}{\multido{\ny=0+1.5}{2}{\ncline{\nx\ny}{v}}}
       \multido{\nx=0+1.5}{2}{\ncline{0\nx}{c1} \ncline{0\nx}{c2}}
       \ncline{1.50}{c1} \ncline{1.51.5}{c2}
       \multido{\ny=0+0.75}{3}{\ncline{3\ny}{v}}
       \ncline{v}{c1}
       \ncline{v}{c2}
       \ncline{c1}{c2}
       \ncline{00}{1.50}
       \ncline{00}{01.5}
       \ncline{1.51.5}{01.5} \ncline{1.51.5}{1.50}

       \ncline{30}{30.75} \ncarc[arcangle=-30]{30}{31.50} \ncline{30.75}{31.50}
       
       \end{pspicture}

       \\
       $H_{9}$ & $H_{10}$ & $H_{11}$ & $H_{12}$
    \\
    \\
    \psset{unit=4mm}
    \begin{pspicture}(-2,-2)(+2,+2)
      \SpecialCoor
      \rput(2; 90){\multido{\ix=-1+1,\nx=0+1}{3}{\Cnode(\ix;  0){0\nx}}}
      \rput(2;210){\multido{\ix=-1+1,\nx=0+1}{3}{\Cnode(\ix;120){1\nx}}}
      \rput(2;330){\multido{\ix=-1+1,\nx=0+1}{3}{\Cnode(\ix;240){2\nx}}}
      
     \Cnode(0,-0.5){x}
      \Cnode(0, 0.5){y}
      \ncline{x}{y}
      
      \multido{\nx=1+1}{2}{\multido{\ny=0+1}{3}{\ncline{\nx\ny}{x} \ncline{\nx\ny}{y}}}
      
      \ncline{00}{x} \ncline{02}{x} \ncline{00}{y} \ncline{01}{y} \ncline{02}{y} \ncarc[arcangle=-30]{01}{x}
      
      \ncline{00}{01} \ncline{01}{02}
      \ncline{10}{11} \ncline{11}{12} \ncarc{10}{12}
      \ncline{20}{21} \ncline{21}{22} \ncarc{20}{22}
      
    \end{pspicture}
    &
    \psset{unit=4mm}
    \begin{pspicture}(-2,-2)(+2,+2)
      \SpecialCoor
      \rput(2; 90){\Cnode(0;  0){01}}
      \rput(2;210){\multido{\ix=-1+1,\nx=0+1}{3}{\Cnode(\ix;120){1\nx}}}
      \rput(2;330){\multido{\ix=-1+1,\nx=0+1}{3}{\Cnode(\ix;240){2\nx}}}
      
      \Cnode(0,-0.5){x}
      \Cnode(0, 0.5){y}
      
      \multido{\nx=0+1}{3}{\ncline{2\nx}{01} \ncline{2\nx}{x} \ncline{1\nx}{x} \ncline{2\nx}{y} \ncline{1\nx}{y} }
      \ncarc[arcangle=-30]{01}{x}
      \ncline{01}{y}
      \ncline{x}{y}

      
      \ncline{10}{11} \ncline{11}{12} \ncarc{10}{12}
      \ncline{20}{21} \ncline{21}{22} 
      
    \end{pspicture}
    &
    \psset{unit=4mm}
    \begin{pspicture}(-2,-2)(+2,+2)
      \SpecialCoor
      \rput(2; 90){\Cnode(0;  0){01}}
      \rput(2;210){\multido{\ix=-1+1,\nx=0+1}{3}{\Cnode(\ix;120){1\nx}}}
      \rput(2;330){\multido{\ix=-1+1,\nx=0+1}{3}{\Cnode(\ix;240){2\nx}}}
      
      \Cnode(0,-0.5){x}
      \Cnode(0, 0.5){y}
      
      \multido{\nx=0+1}{3}{\ncline{2\nx}{01} \ncline{2\nx}{x} \ncline{1\nx}{x} \ncline{2\nx}{y} \ncline{1\nx}{y} }
      \ncarc[arcangle=-30]{01}{x}
      \ncline{01}{y}
      
      \ncline{x}{y}

      
      \ncline{10}{11} \ncline{11}{12} \ncarc{10}{12}
      \ncarc{20}{22}  
      
    \end{pspicture}
    &
    
    \psset{unit=5mm}
    \begin{pspicture}(-0.5,-.5)(2,2.5)
      \multido{\nx=0+1}{3}{\multido{\ny=0+1}{3}{\Cnode(\nx,\ny){\nx\ny}}}
      \multido{\nx=0+1}{3}{
        \ncline{\nx0}{\nx1} \ncline{\nx1}{\nx2} \ncarc{\nx0}{\nx2}
      }
   \multido{\nx=0+1,\nz=1+1}{2}{\multido{\ny=0+1}{3}{\multido{\nu=0+1}{2}{\ncline{\nx\ny}{\nz\nu}}}}
      \ncline{00}{12} \ncline{01}{12} \ncline{02}{12}
      \ncarc{00}{22}  \ncarc {01}{22} \ncarc {02}{22}
    \end{pspicture}
    &
    
    \psset{unit=5mm}
    \begin{pspicture}(-0.5,-.5)(2,2.5)
      \multido{\nx=0+1}{3}{\multido{\ny=0+1}{3}{\Cnode(\nx,\ny){\nx\ny}}}
      \multido{\nx=0+1}{3}{
        \ncline{\nx0}{\nx1}\ncline{\nx1}{\nx2}\ncarc{\nx0}{\nx2}
      }

  \multido{\nx=0+1,\nz=1+1}{2}{\multido{\ny=0+1}{3}{\multido{\nu=0+1}{3}{\ncline{\nx\ny}{\nz\nu}}}}

    \end{pspicture}
    \\
    $H_{13}$ & $H_{14}$ & $H_{15}$ & $H_{16}$ & $H_{17}$

  \end{tabular}
  \caption{Forbidden subgraphs of partitionable cographs.}
  \label{tabH}
\end{figure}
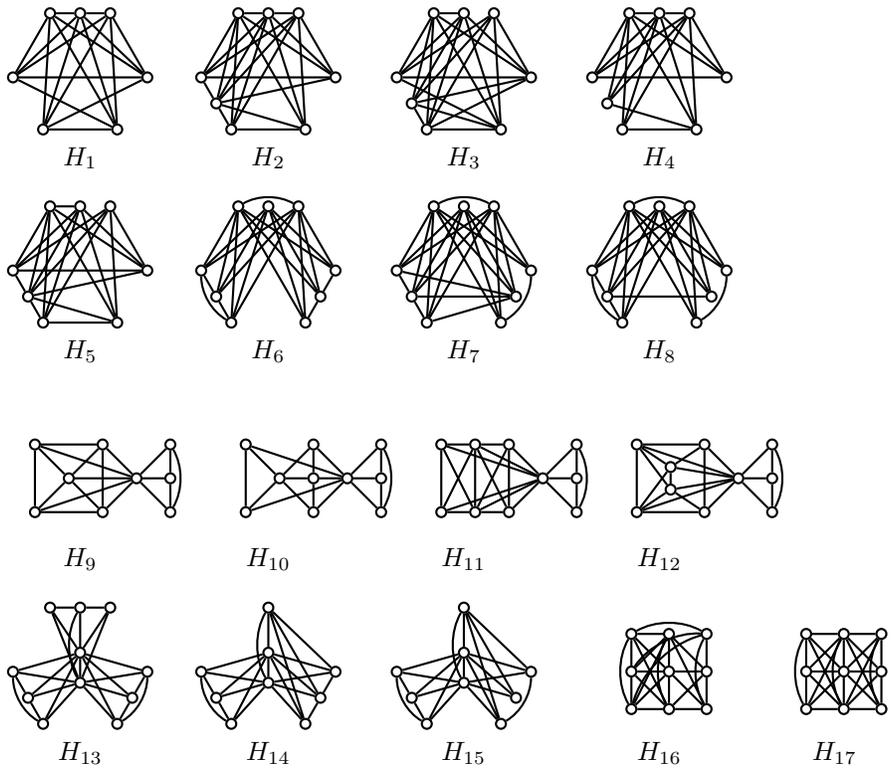

The following graphs depicted in Figure \ref{tabH} will be used:
\begin{itemize}
\itemsep0em
    \item[(1)] $H_1 = 2K_1 \oplus 2K_1 \oplus 2K_1 \oplus K_1$
    \item [(2)] $H_2 = P_3 \oplus K_1 \oplus 2K_2$
    \item [(3)] $H_3= 2K_1 \oplus(K_2 \cup K_1)\oplus(K_2 \cup K_1)$
    \item [(4)] $H_4 = P_3 \oplus (K_2 \cup P_3)$
    \item [(5)] $H_5 = (K_2 \cup K_1) \oplus K_1 \oplus 2K_2$
    \item [(6)] $H_6 = (K_2 \cup K_1) \oplus (K_3 \cup P_3)$
    \item [(7)] $H_7 = (K_2 \cup K_1) \oplus (K_2 \cup (P_3 \oplus K_1))$
    \item [(8)] $H_8 = (K_2 \cup K_1) \oplus (K_2 \cup (K_1 \oplus (K_2 \cup K_1)))$
    \item [(9)] $H_9 = K_1 \oplus (K_3 \cup (C_4 \oplus K_1))$ 
    \item [(10)] $H_{10} = K_1 \oplus (K_3 \cup (K_1 \oplus (P_3 \cup K_2)))$
    \item [(11)] $H_{11} = K_1 \oplus (K_3 \cup (K_2 \oplus 2K_2))$
    \item [(12)] $H_{12} = K_1 \oplus (K_3 \cup ((K_2 \cup K_1) \oplus (K_2 \cup K_1)))$ 
    \item [(13)] $H_{13} = K_2 \oplus (P_3 \cup 2K_3)$
    \item [(14)] $H_{14} = K_2 \oplus (K_3 \cup (P_3 \oplus K_1))$
    \item [(15)] $H_{15} = K_2 \oplus (K_3 \cup (K_1 \oplus (K_1 \cup K_2))$
    \item [(16)] $H_{16} = (K_3 \cup K_2) \oplus (K_3 \cup K_1)$
    \item [(17)] $H_{17} = K_3 \oplus 2K_3$
\end{itemize}

\medskip

We make a simple observation. It is folklore that a graph is bipartite if and only if it contains no odd cycle. Since a cograph contains no odd hole, it follows that  a cograph $G = (V, E)$ is partitionable if and only if there exists a partition $\{A, B\}$ of $V$ such that $A$ induces a $P_3$-free graph and $B$ induces a bipartite graph.

We are now ready to prove the theorem.

\begin{proof}[Proof of Theorem \ref{mainthm}] ($\Leftarrow$) This follows by a careful case analysis.

($\Rightarrow$) Suppose $G$ is vertex minimal in-partitionable. By Lemma \ref{connectedobs}, $G$ is connected. We prove that $G$ must contain one of the graphs $H_1, \dots, H_{17}$. 

\begin{claim}\label{c1}
If $G$ has no universal vertex then $G$ contains one of the graphs $H_1, \dots, H_8$, $H_{16}$.
\end{claim}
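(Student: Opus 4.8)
The plan is to exploit the join structure of a connected cograph. Since $G$ is a connected cograph on more than one vertex, I would write $G = G[A]\oplus G[B]$; by the minimality of $G$ every proper induced subgraph is partitionable, so in particular $G[A]$ and $G[B]$ are partitionable. The first step is to record a criterion for when a join is partitionable. Using the Observation that for cographs the triangle-free part may be taken bipartite, together with the elementary facts that a join of two nonempty graphs is a disjoint union of cliques only if it is a single clique (so both factors are cliques) and is triangle-free only if both factors are edgeless, one checks that $G[A]\oplus G[B]$ is partitionable \emph{if and only if} one of the following holds: (a) $G[A]$ and $G[B]$ are both threshold; (b) one factor is monopolar and the other is an independent set; (c) one factor is a clique and the other is $(1,2)$-partitionable; (d) one factor is a disjoint union of cliques and the other is bipartite. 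Because $G$ has no universal vertex, neither $G[A]$ nor $G[B]$ is complete, so each contains $2K_1$ and case (c) cannot occur. Hence the in-partitionability of $G$ means precisely that (a), (b) and (d) all fail.

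Next I would split on monopolarity. Suppose first that some factor, say $G[B]$, is not monopolar. Since monopolarity is preserved by disjoint unions, $G[B]$ has a non-monopolar component, which is a connected cograph and so contains one of $J_1,\dots,J_4$ by Theorem~\ref{monopolarthm1}. As $G[A]$ contains $2K_1$, the graph $G$ contains $2K_1\oplus J_i$, and a direct computation identifies $2K_1\oplus J_1 = H_1$, $2K_1\oplus J_2 = H_4$, $2K_1\oplus J_3 = H_2$ and $2K_1\oplus J_4 = H_3$; this case is done.

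Otherwise both factors are monopolar. The failure of (b) forces neither factor to be an independent set, so each has an edge, and hence each contains $K_2\cup K_1$ whenever it is a disjoint union of cliques. If both factors are $P_3$-free (disjoint unions of at least two cliques), then the failure of (d) forces each to contain $K_3$; moreover, unless some factor contains $K_3\cup K_2$, both factors have the form $K_r\cup sK_1$ with $r\ge 3$, and then $G$ is partitionable (merge the two large cliques into a single red clique $K_{r+r'}$ and keep the remaining isolated vertices, an independent set on each side, as a blue complete bipartite graph). Thus one factor contains $K_3\cup K_2$ and the other contains $K_3\cup K_1$, giving $(K_3\cup K_2)\oplus(K_3\cup K_1) = H_{16}$. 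If instead one factor, say $G[A]$, is a disjoint union of cliques (so $G[A]\supseteq K_2\cup K_1$) while $G[B]$ has an induced $P_3$, then the failure of (d) again yields $K_3\subseteq G[B]$; here I would verify that $G[B]$ is $C_4$-free and contains $2K_2$, apply Lemma~\ref{third} to extract one of $S_1,\dots,S_4$, and observe that $(K_2\cup K_1)\oplus S_1 = H_6$, $(K_2\cup K_1)\oplus S_2 = H_5$, $(K_2\cup K_1)\oplus S_3 = H_8$ and $(K_2\cup K_1)\oplus S_4 = H_7$; the configurations in which $G[B]$ contains $C_4$, is $2K_2$-free, or in which neither factor is $P_3$-free would be shown to collapse into the non-monopolar case or into the $H_{16}$ argument.

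The main obstacle is the completeness of this last case. The subtlety is that the weaker outputs of Lemma~\ref{first} do not suffice: the joins $(K_2\cup K_1)\oplus\mathrm{diamond}$, $(K_2\cup K_1)\oplus\mathrm{paw}$ and $(K_2\cup K_1)\oplus C_4$ are all partitionable (by Lemma~\ref{basiclemma1} or directly by criterion (d)), so one genuinely needs the richer obstructions $S_1,\dots,S_4$ of Lemma~\ref{third}, whose hypotheses require certifying that the relevant factor is $C_4$-free and contains $2K_2$, $P_3$ and $K_3$. Showing that every remaining monopolar configuration without a universal vertex either supplies such a factor or reduces to one of the already-settled cases is where the real work of the claim lies.
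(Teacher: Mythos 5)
Your framework is essentially the paper's, and large parts of it are correct: the join decomposition $G=G[A]\oplus G[B]$ with each factor containing $2K_1$, the extraction of $J_1,\dots,J_4$ via Theorem~\ref{monopolarthm1} whenever a factor is non-monopolar (your identifications $2K_1\oplus J_1=H_1$, $2K_1\oplus J_2=H_4$, $2K_1\oplus J_3=H_2$, $2K_1\oplus J_4=H_3$ all check out, and using only $2K_1$ on the opposite side is in fact slightly cleaner than the paper, which invokes the $J_i$ only when $G[A]$ is a stable set), and the $H_{16}$ argument when both factors are $P_3$-free. But the remaining case is a genuine gap, and the escape route you sketch for it is false. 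You assert that the configurations in which $G[B]$ contains $C_4$, or is $2K_2$-free, or in which neither factor is $P_3$-free, ``collapse into the non-monopolar case or into the $H_{16}$ argument.'' They do not. Take $G=(K_2\cup K_1)\oplus(C_4\cup K_3)$: both factors are monopolar, there is no universal vertex, $G$ is in-partitionable by your own criterion, Lemma~\ref{third} is inapplicable because $G[B]$ contains $C_4$, and $G$ does not contain $H_{16}$ (it has no two vertex-disjoint, completely joined triangles together with the required pendant $K_2$); the obstruction actually present is $H_6=(K_2\cup K_1)\oplus(K_3\cup P_3)$, obtained from $K_3\cup P_3\subseteq C_4\cup K_3$. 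Likewise $G=(2K_2\cup K_1)\oplus(\mathrm{paw}\cup K_1)$ has both factors monopolar, $G[B]$ is $2K_2$-free and $C_4$-free (indeed threshold), $H_{16}$ does not embed, and the obstruction present is $H_5=2K_2\oplus\mathrm{paw}$.

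What is missing is exactly the additional machinery the paper brings in at this point. When $G[A]$ contains $2K_2$, Lemma~\ref{first} applied to $G[B]$ (which contains $P_3$ and $K_3$) yields $F_1,F_2,F_3$ and hence $H_6$, $H_2$, $H_5$. When $G[A]$ has only one nontrivial clique, i.e.\ $G[A]=K_r\cup sK_1$, one first disposes of the cases where $G[B]$ is threshold or bipartite and then invokes the bi-threshold characterisation (Theorem~\ref{btheorem}) to obtain $B_1,\dots,B_6$ and hence $H_5,H_1,H_3,H_7,H_6,H_8$; it is this theorem, not Lemma~\ref{third}, that absorbs the possibilities that $G[B]$ contains $C_4$ or lacks $2K_2$. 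Finally, when both factors contain $P_3$, one uses $C_4\oplus P_3=H_1$ and, via Lemma~\ref{second}, $P_3\oplus Q_1=H_4$ and $P_3\oplus Q_2=H_2$. None of these subcases reduces to your non-monopolar branch or to $H_{16}$, so the final paragraph of your proposal is not a deferrable technicality but the bulk of the case analysis, and as sketched it would not close.
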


\begin{proof}[Proof of Claim~\ref{c1}]
Since $G$ is connected it is the join of two cographs $G[A]$ and $G[B]$. By the minimality of $G$, $G[A]$ and $G[B]$ are partitionable. Since $G$ has no universal vertex, $G[A]$ and $G[B]$ have no universal vertex. Consequently $G[A]$ and $G[B]$ each contain $2K_1$. We consider a number of  cases.

\medskip
\noindent
\textbf{Case 1}: $G[A]$ is $P_3$-free.
\medskip

\noindent
$G[A]$ is a union of at least two cliques $C_1, C_2$ because it contains $2K_1$. 

\medskip
\noindent
\textbf{Subcase 1.1}: $G[B]$ is $P_3$-free. 
\medskip

\noindent
Similarly $G[B]$ is a union of at least two cliques $C_3, C_4$. If $G[B]$ or $G[A]$ is bipartite then $G$ is partitionable. So it may be assumed, without loss of generality, that $|C_1|, |C_3| \geq 3$. Moreover $C_2$ or $C_4$ contains $K_2$, otherwise $G[A]$ and $G[B]$ form threshold graphs and $G$ is partitionable by Lemma \ref{basiclemma1}. We imply that $G$ contains $H_{16} = (K_3 \cup K_2) \oplus (K_3 \cup K_1)$.

\medskip
\noindent
\textbf{Subcase 1.2}: $G[B]$ contains $P_3$. 
\medskip

\noindent
\textit{(1)} $G[A]$ is a stable set of order at least two. 

\noindent
If $G[B]$ is monopolar then $G$ is partitionable. Otherwise, by Theorem \ref{monopolarthm1}, $G[B]$ contains one of the graphs $J_1, J_2, J_3, J_4$. It follows that $G$ contains $H_1 = 2K_1 \oplus J_1$, $H_2= 2K_1 \oplus J_3$, $H_3 = 2K_1 \oplus J_4$, or $H_4 = 2K_1 \oplus J_2$, respectively. 

\noindent
\textit{(2)} $G[A]=K_r \cup K_1$ for some integer $r \geq 2$. 

\noindent
 If $G[B]$ is a threshold graph then $G$ is $(1, 2)$-partitionable. If $G[B]$ is bipartite then $G$ is partitionable. Otherwise, \ie $G[B]$ contains $K_3$, and $C_4$ or $2K_2$, by Theorem \ref{btheorem}, $G[B]$ contains one of the graphs $B_1, B_2, B_3, B_4, B_5$ or $B_6$.  It follows that $G$ contains $H_{5} = (K_2 \cup K_1) \oplus B_1$, $H_1 = 2K_1 \oplus B_2$, $H_3 = (K_2 \cup K_1) \oplus B_3$, $H_{7} = (K_2 \cup K_1) \oplus B_4$, $H_{6} = (K_2 \cup K_1) \oplus B_5$, or $H_{8} = (K_2 \cup K_1) \oplus B_6$, respectively.

\noindent
\textit{(3)} $G[A]$ contains $2K_2$.

\noindent
If $G[B]$ is bipartite then $G$ is partitionable. We may thus assume that $G[B]$ contains a triangle (as $G[B]$ is a cograph). Since $G[B]$ contains $P_3$, by Lemma \ref{first}, $G[B]$ contains one of the graphs $F_1, F_2, F_3$. It follows that $G$ contains $H_{6} = (K_2 \cup K_1) \oplus F_1$, $H_2 = 2K_2 \oplus F_2$, or $H_{5} = 2K_2 \oplus F_3$, respectively.  This completes Case 1.

\medskip
\noindent
\textbf{Case 2}: $G[A]$ and $G[B]$ contain $P_3$.
\medskip

\noindent
Since $G$ is a cograph, it has no induced $C_5$. Together with the fact that a threshold graph is a $(C_4, P_4, 2K_2)$-free graph it suffices to consider the following cases. 

\medskip
\noindent
\textbf{Subcase 2.1}: $G[A]$ contains  $C_4$.
\medskip

\noindent
Then $G$ contains $H_1 = C_4 \oplus P_3$.

\medskip
\noindent
\textbf{Subcase 2.2}: $G[A]$ contains $2K_2$. 
\medskip

\noindent
By Lemma \ref{second}, $G[A]$ contains $Q_1$ or $Q_2$. It follows that $G$ contains $H_4 = P_3 \oplus Q_1$ or $H_2 = P_3 \oplus Q_2$, respectively.

\medskip
\noindent
\textbf{Subcase 2.3}: $G[A]$ and $G[B]$ are threshold graphs.
\medskip

\noindent
It follows by Lemma \ref{basiclemma1} that $G$ is partitionable.  This completes Case 2 and the proof of Claim~\ref{c1}. 
\end{proof}

\begin{claim}\label{c2}
If $G$ has a universal vertex $v$ such that $G' = G \sm v$ is disconnected then $G$ contains one of the graphs $H_9, H_{10}, H_{11}, H_{12}$.
\end{claim}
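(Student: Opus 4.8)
The plan is to analyse partitionability of $G = K_1 \oplus G'$ according to which side of the partition receives the universal vertex $v$, reduce this to two simple structural alternatives on the components of $G'$, and then use vertex-minimality to force two distinct components each carrying a triangle.

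First I would record a criterion for any graph of the form $K_1 \oplus H$. In a partition $\{A,B\}$ the universal vertex $v$ lies in $A$ or in $B$. If $v \in A$, then since $G[A]$ is a disjoint union of cliques and $v$ is adjacent to all of $A$, the part $A$ is a single clique, so $K = A \setminus v$ is a clique of $H$ and the requirement becomes that $H \setminus K$ is triangle-free. If $v \in B$, then since $G[B]$ is triangle-free and $v$ is adjacent to all of $B \setminus v$, the set $I = B \setminus v$ must be independent, and the requirement becomes that $H \setminus I$ is $P_3$-free, that is, a disjoint union of cliques. Hence $K_1 \oplus H$ is partitionable if and only if \textbf{(a)} $H$ becomes triangle-free after deleting some clique, or \textbf{(b)} $H$ is monopolar. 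Specialising to $H = G'$ disconnected, a clique lies inside a single component while an independent set meets the components separately; so \textbf{(a)} amounts to ``some component of $G'$ becomes triangle-free after a clique deletion and every other component is already triangle-free'', and \textbf{(b)} amounts to ``every component of $G'$ is monopolar''. Therefore $G$ is in-partitionable precisely when both \textbf{(a)} and \textbf{(b)} fail.

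Since \textbf{(b)} fails, some component $G_1$ of $G'$ is a non-monopolar connected cograph, so by Theorem~\ref{monopolarthm1} it contains one of $J_1,\dots,J_4$; as each $J_i$ contains a triangle, $G_1$ itself contains a triangle. I would then split on how many components of $G'$ contain a triangle. If at least two do, I pick a triangle $K_3$ in a component other than $G_1$; since distinct components are vertex-disjoint and pairwise non-adjacent, $G'$ contains the induced subgraph $K_3 \cup J_i$, and hence $G = K_1 \oplus G'$ contains $K_1 \oplus (K_3 \cup J_i)$. Using $J_1 = C_4 \oplus K_1$, $J_2 = K_1 \oplus (P_3 \cup K_2)$, $J_3 = K_2 \oplus 2K_2$, and $J_4 = (K_2 \cup K_1) \oplus (K_2 \cup K_1)$, one checks that $K_1 \oplus (K_3 \cup J_i) = H_{8+i}$ for $i = 1,2,3,4$, which settles this case.

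The remaining case --- at most one component of $G'$ contains a triangle --- is where minimality does the work, and I expect it to be the only delicate point. Here the unique triangle-bearing component must be $G_1$ (the non-monopolar one, as non-monopolar forces a triangle), so every other component is triangle-free; and since \textbf{(a)} fails, the obstruction lies entirely in $G_1$, meaning no clique deletion makes $G_1$ triangle-free. Applying the general criterion above to the induced subgraph $G[\{v\}\cup V(G_1)] = K_1 \oplus G_1$, both alternatives fail (no single clique deletion makes $G_1$ triangle-free, and $G_1$ is non-monopolar), so $K_1 \oplus G_1$ is in-partitionable. Because $G'$ has a second component, this is a \emph{proper} induced subgraph of $G$; since partitionability is preserved under taking induced subgraphs, an in-partitionable proper induced subgraph contradicts the vertex-minimality of $G$. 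Hence this case cannot occur, and $G$ always contains one of $H_9,H_{10},H_{11},H_{12}$.
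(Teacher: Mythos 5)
Your proof is correct. The overall skeleton matches the paper's --- both arguments locate a non-monopolar component of $G'$, extract one of $J_1,\dots,J_4$ from it via Theorem~\ref{monopolarthm1}, and then must show that a second component carries a triangle --- but the way you force that second triangle is genuinely different. The paper first argues, via minimality applied to each $v\oplus G_i$, that every component is either monopolar or $(1,2)$-partitionable, so that the non-monopolar component $G_j$ splits into a clique $C(G_j)$ and a bipartite graph $S(G_j)$; if no other component contained a triangle it then exhibits the explicit partition $A=\{v\}\cup C(G_j)$, $B=S(G_j)\cup\bigcup_{p\neq j}G_p$ of $G$, contradicting in-partitionability of $G$ itself. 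You instead isolate a clean necessary-and-sufficient criterion for partitionability of $K_1\oplus H$ (either some clique deletion makes $H$ triangle-free, or $H$ is monopolar), observe that in the remaining case both alternatives already fail for the single component $G_1$, and conclude that the \emph{proper} induced subgraph $v\oplus G_1$ is in-partitionable, contradicting vertex-minimality rather than in-partitionability. Your route bypasses the $(1,2)$-partitionability machinery (the component dichotomy and Remark~\ref{monopolarcoro}) entirely, and the if-and-only-if criterion makes the case analysis more transparent than the paper's rather terse construction; the paper's version, in exchange, is self-contained in that it only ever contradicts the in-partitionability of $G$ directly. Both arguments are sound and both correctly identify $K_1\oplus(K_3\cup J_i)$ with $H_{8+i}$.
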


\begin{proof}[Proof of Claim~\ref{c2}]
Let $r\geq 2$ be an integer and $\mathcal{G} = \{G_1, \dots, G_r\}$ be the set of components of $G'$. By the minimality of $G$ for every graph $G_i \in \mathcal{G}$ the graphs $G_i$ and $G'_i = v \oplus G_i$ are partitionable. We claim that there exists a graph $T \in \mathcal{G}$ that is $(1, 2)$-partitionable but not monopolar.

To see this consider a graph $K \in \mathcal{G}$. If every partition of $K$ into $k$ disjoint cliques and $l$ independent sets has $\min(k, l) \geq 2$ then $K' = v \oplus K$ is in-partitionable. So we may assume that each $G_i \in \mathcal{G}$ is either $(1, 2)$-partitionable or monopolar. But If every $G_i \in \mathcal{G}$ is monopolar then $G'_i$ admits a partition where $v$ is in the bipartite part. Hence, as the $G_i$'s are disjoint, $G$ also admits a partition where $v$ is  again in the bipartite part. 

From now on, let $G_j \in \mathcal{G}$  be a graph that is $(1, 2)$-partitionable but not monopolar for some $j \in \{1, \dots r\}$. By Theorem \ref{monopolarthm1}  and Remark \ref{monopolarcoro}, $G_j$ contains one of the graphs $J_1, J_2, J_3, J_4$. For contradiction suppose  there exists no $p \not= j$ such that $G_p$ contains $K_3$. Let $C(G_j)$ and $S(G_j)$ denote the partition of $G_j$ into a clique and a bipartite graph, respectively. Then $V = A \cup B$ where $A = v \cup C(G_j)$, and $B=S(G_j) \cup \bigcup_{p \not=j} G_p $ is a partition of $V$ where $G[A]$ is $P_3$-free and $G[B]$ is bipartite, a contradiction. 

We conclude that $G$ contains $H_9 = v \oplus (K_3 \cup J_1)$, $H_{10} = v \oplus (K_3 \cup J_2)$, $H_{11} = v \oplus (K_3 \cup J_3)$ or $H_{12} = v \oplus (K_3 \cup J_4)$. 
\end{proof}

\begin{claim}\label{c3}
If $G$ has a universal vertex $v$ such that $G' = G \sm v$ is connected then $G$ contains one of the graphs $H_1, H_2, H_4, H_5, H_{13}, H_{14}, H_{15}, H_{17}$.
\end{claim}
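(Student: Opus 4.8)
The plan is to reduce the partitionability of $G$ entirely to a known property of $G' = G \sm v$. Since $v$ is universal and $G'$ is connected, we have $G = v \oplus G'$, where $G'$ is a connected cograph. The central step I would prove is the equivalence
\[
G = v \oplus G' \text{ is partitionable} \iff G' \text{ is monopolar nearly split,}
\]
after which the claim follows by invoking Theorem~\ref{almostsplitthm} and a short computation of joins.

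To establish the equivalence I would use the reformulation noted just before the theorem, namely that a cograph is partitionable exactly when its vertex set splits into a $P_3$-free part $A$ and a bipartite part $B$. For the forward direction, fix such a partition and distinguish whether $v \in A$ or $v \in B$. If $v \in A$, then since $v$ is adjacent to every other vertex of $A$, any non-adjacent pair in $A \sm v$ would create an induced $P_3$ through $v$; hence $A \sm v$ is a clique, so $G'$ splits into the clique $A \sm v$ and the bipartite set $B$, \ie $G'$ is $(1,2)$-partitionable. If instead $v \in B$, then any edge inside $B \sm v$ would form a triangle with $v$ and destroy bipartiteness; hence $B \sm v$ is independent, so $G'$ splits into the union of cliques $A$ and the independent set $B \sm v$, \ie $G'$ is monopolar. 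Either way $G'$ is monopolar nearly split. For the converse, a $(1,2)$-partition of $G'$ into a clique $K$ and a bipartite set $L$ yields the partition $\{\{v\}\cup K,\, L\}$ of $G$ (here $\{v\}\cup K$ is still a clique since $v$ is universal), and a monopolar partition of $G'$ into an independent set $I$ and a union of cliques $U$ yields $\{U,\, \{v\}\cup I\}$ (here $\{v\}\cup I$ induces a star, hence is bipartite); both are valid partitions of $G$.

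With the equivalence in hand, since $G$ is in-partitionable, $G'$ is not monopolar nearly split; as $G'$ is a connected cograph, Theorem~\ref{almostsplitthm} guarantees that $G'$ contains one of $R_1, \dots, R_8$ as an induced subgraph, and because $v$ is universal, $G$ then contains $v \oplus R_i$ as an induced subgraph. The remaining task is the routine-but-careful verification, using only associativity and commutativity of $\oplus$ together with identities such as $K_1 \oplus K_1 = K_2$, $K_1 \oplus 2K_1 = P_3$, and $P_3 \oplus K_1 = \mathrm{diamond}$, that each $v \oplus R_i$ coincides with exactly one graph from the list in the claim: namely $v\oplus R_1 = H_1$, $v\oplus R_2 = H_5$, $v\oplus R_3 = H_4$, $v\oplus R_4 = H_2$, $v\oplus R_5 = H_{17}$, $v\oplus R_6 = H_{13}$, $v\oplus R_7 = H_{14}$, and $v\oplus R_8 = H_{15}$. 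I expect the main obstacle to be conceptual rather than computational: pinning down the exact equivalence with ``monopolar nearly split'' — in particular getting the two universality arguments in the forward direction right — since once that is correct the appeal to Theorem~\ref{almostsplitthm} and the join bookkeeping are straightforward.
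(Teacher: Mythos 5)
Your proof is correct and follows essentially the same route as the paper: it reduces to Theorem~\ref{almostsplitthm} via the observation that in-partitionability of $G$ forces $G'$ to be neither monopolar nor $(1,2)$-partitionable, and then identifies each $v \oplus R_i$ with the corresponding $H_j$ exactly as the paper does. The only difference is that you also prove the (unneeded) forward direction of your equivalence; the paper uses only the implication that $G'$ being monopolar nearly split would make $G = v \oplus G'$ partitionable.
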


\noindent
\textit{Proof of Claim~\ref{c3}}. By the minimality of $G$, $G'$ is partitionable. In particular, $G'$ is neither monopolar nor $(1,2)$-partitionable, otherwise $G = G' \oplus v$ is partitionable. Hence, by Theorem \ref{almostsplitthm}, $G'$ contains one of the graphs $R_1,  \dots, R_8$.  It follows that $G$ contains $H_1 = v \oplus R_1$, $H_5 = v \oplus R_2$, $H_4 = v \oplus R_3$, $H_2 = v \oplus R_4$, $H_{17} = v \oplus R_5$, $H_{13} = v \oplus R_6$, $H_{14} = v \oplus R_7$, or $H_{15} = v \oplus R_8$. This completes the proof of Claim~\ref{c3} and Theorem \ref{mainthm}.
  \end{proof}

\section{Further Work}\label{fw}

Chudnovsky described in a series of papers ~\cite{chud1, chud2} a decomposition theorem of bull-free graphs. In \cite{chud2} the basic graph class $\mathcal{T}_1$ is the set of bull-free graphs $G$ that admit a partition $\{A, B\}$ of $V(G)$ such that $A$ induces a triangle-free graph and $B$ induces a disjoint union of cliques together with some adjacency constraints between $A$ and $B$. Unfortunately our reduction for the bull-free case does not satisfy these adjacency constraints. Hence the recognition of the class $\mathcal{T}_1$ remains an open problem. 

A graph is Meyniel if every odd cycle of length at least $5$ contains at least two chords. Meyniel graphs are between chordal and perfect graphs. Since our partition problem is tractable in the former case but \NP-complete in the latter, it would be of interest to narrow this complexity gap by focusing on Meyniel graphs. 

A possible extension of our result on cographs is the following. Given a finite sequence $(H_1, \dots ,H_k)$ of cographs, can we compute the finite set $F$ of
cographs such that for every cograph $G$, the vertices of G can be partitioned into $V_1, \dots,V_k$ such that $G[V_i]$ is $H_i$-free if and only if
$G$ is $F$-free? By Damaschke's result \cite{damaschke} we know that such a finite set $F$ of forbidden induced subgraphs exists. It would be enough to prove a recursive bound on the size of the graphs in $F$. For $k=2$, $H_1=K_3$ and $H_2=P_3$ we described the set $F$ in Section \ref{cographsection}.

 Another more general problem to consider is the following. Let $G = (V, E)$ be a graph, and let $\mathcal{F}$ and $\mathcal{Q}$ be fixed additive induced hereditary properties. The problem of deciding whether $V$ has a partition $\{A, B\}$ such that $G[A]$ is $\mathcal{F}$-free and $G[B]$ is $\mathcal{Q}$-free is $\NP$-complete \cite{alastair}. What is the complexity of this problem when restricted to special graph classes?

\section*{Acknowledgments}

Thanks are due to Matthew Johnson for carefully reviewing the paper. We thank the referees for helpful suggestions and pointing out an error in the previous manuscript.


\end{document}